\documentclass[11pt]{article}  
\usepackage[square,numbers,sort&compress]{natbib}
\usepackage{doi}

\usepackage{hyperref}
\usepackage{authblk}
\usepackage{epsfig}  
\usepackage{amsmath}
\usepackage{braket}
\usepackage{amsfonts}
\usepackage{xcolor}
\usepackage{amsthm}
\usepackage{ bbold }
\usepackage{enumitem}   
\usepackage{amssymb}
\usepackage{multicol}
\usepackage{amstext} 
\usepackage{array}   
\usepackage{tabu}
\usepackage{framed}
\usepackage{subfig}
\usepackage{float}
\usepackage{cancel}
\usepackage{tikz-cd,mathtools}
\usepackage{algorithm}
\usepackage{algpseudocode}
\usepackage{latexsym}
\usepackage{wasysym}
\usepackage{tikz}
\usepackage{pgfplots}
\usetikzlibrary{shapes}
\usetikzlibrary{3d,perspective}

\newtheorem{theorem}{Theorem}[section]
\newtheorem{proposition}{Proposition}[section]

\newtheorem{problem}{Problem}[section]
\newtheorem{definition}{Definition}[section]
\newtheorem{corollary}{Corollary}[section]

\newtheorem{condition}{Condition}[section]
\newtheorem{example}{Example}[section]

\newcommand{\NL}{\mathsf{NL}}
\newcommand{\LL}{\mathsf{L}}
\newcommand{\x}{\mathbf{x}}

\newcommand{\order}[1]{\mathcal{O} \Big( #1 \Big)}

\newcommand{\norm}[1]{\lVert #1 \rVert }

\def\precdot{\mathrel{\prec\kern-.5em\hbox{$\cdot$}}}
\def\succdot{\mathrel{\hbox{$\cdot$}\kern-.5em\succ}}
\def\rddots{\cdot^{\cdot^{\cdot}}}

\DeclareMathSymbol{\shortminus}{\mathbin}{AMSa}{"39}

\DeclareMathOperator\rank{rank}

\newcommand{\mycircle}[5] 
{%
  \pgfmathsetmacro\vtheta{atan2(#2,#1)}                     
  \pgfmathsetmacro\vphi  {acos(#3/sqrt(#1*#1+#2*#2+#3*#3))} 
  \begin{scope}[rotate around z=\vtheta,rotate around y=\vphi,canvas is xy plane at z=#5]
    \draw[fill=gray,fill opacity=0.2] (0,0) circle (#4);   
  \end{scope}
}

\begin{document}  

\title{A Framework for Spatial Quantum Sensing}   

\author[1,2,3,4]{Lu\'{i}s Bugalho}
\author[1,2,3]{Yasser Omar}
\author[4]{Damian Markham}

\affil[1]{\footnotesize Instituto Superior Técnico, Universidade de Lisboa, Portugal}
\affil[2]{Physics of Information and Quantum Technologies Group, Centro de Física e\par Engenharia de Materiais Avançados (CeFEMA), Portugal}
\affil[3]{PQI -- Portuguese Quantum Institute, Portugal}
\affil[4]{Sorbonne Université, CNRS, LIP6, 4 Place Jussieu, Paris F-75005, France}
\date{}
\maketitle

\abstract{Quantum sensor networks promise precision advantages over classical and single-sensor strategies, in particular when the estimator is non-local. We address the problem of finding such estimators through a framework we connote spatial quantum sensing: given an underlying field interrogated by a network of quantum sensors at fixed positions, construct an estimator for a property of the field, for example, distinguishing a source of signal, or evaluating the field or its derivatives at an arbitrary point. We first treat polynomial fields, casting the task as an interpolation problem, and then generalize to fields modeled by analytic functions, which yields general least-squares estimators. A central and largely unaddressed question is under what conditions on sensor placement these estimators are well-defined and error-free. For $m$-dimensional arrays we give explicit constructions and proofs in the interpolation setting using algebraic geometry, and establish necessary and sufficient conditions in the general case. Comparing a non-local entangled protocol with the best local strategy, we show that entanglement yields maximal precision in distributed sensing under global resource constraints. Finally, we introduce error-free subspaces, a technique that translates prior knowledge of the field into a reduction in the number of required sensors. We expect these techniques to be broadly useful in sensing problems across scales, ranging from earth-scale experiments to local applications such as biological imaging.
}

\section{Introduction}

Distributed sensing and, in particular, distributed quantum sensing are good candidates for near-term quantum technologies, achieving results that lie beyond what is classically possible \cite{Paris2009,Komar2016,Riehle2017,Eldredge2018,Ge2018,Zhuang2018,Proctor2018d,Qian2019,Sidhu2019,Oh2020,Rubio2020a,Goldberg2021,Liu2021e}, in some scenarios. This work is motivated by early work \cite{Sekatski2019,Rubio2020a,Qian2020,Hamann2022,Hamann2023,Hamann2024}, including experimental realizations \cite{Bate2025}, where the idea of finding linear functions that allow the direct measurements of properties of a field was introduced and developed. Among these problems  one can find distinguishing the noise signals from the target signals \cite{Sekatski2019,Rubio2020a,Qian2020,Hamann2022}, or finding derivatives of the field at different locations \cite{Sekatski2019,Rubio2020a,Qian2019,Qian2020,Bringewatt2021,Hamann2022}. Our work extends this even further by building a framework that covers all the problems in these past works, and takes a step forward to new problems, while shining a light over the limiting aspects of such methodology and how to deal with them.

A key feature of these studies is the distributed nature of the sensing scenario, where the objective is to estimate a target function that is linear in the space of local parameters. This essentially means dealing with the problems for which one can build a linear estimator for a property of the field with access to the values of the field in a set of points in the domain. It has been shown that there exist not only quantum states capable of extracting only the desired linear function \cite{Bugalho2025}, but also that entangled states maximize the information that can be obtained about these functions \cite{Eldredge2018,Proctor2018d,Sekatski2019,Rubio2020a,Bugalho2025}.

This motivates the results of this paper, which focus on finding the categories of sensing problems for which one can find a linear estimator, where the above results imply that entangling quantum sensors gives an (optimal) advantage. In particular, we explore a description -- spatial sensing --, with multiple possible applications, where the parameters are given by interrogation of a global field with a quantum sensor. This field, as will be explained shortly, is simply a function that attributes a value to every point of the domain. The domain can be perceived as a spatial domain, with one, two or three dimensions, or even extend to the time dimension, if one is capable of creating time-entangled states ($e.g.$ time-bin entanglement \cite{Versteegh2015,Zhang2018}). It can also be, for example, the space of phases in a interferometry experiment, with multiple sources for the overall phase. This can also be extended to consider data and arbitrary functions of that data, such as statistical information of sets of data and so on. 

For this reason, there are several problems that could fit inside of this formalism. The most obvious ones correspond to physical fields one is trying to extract information of, for example the gravitational field or a magnetic field. For these examples, one can already find quantum sensors such as atom interferometers \cite{Abend2019}, and nitrogen-vacancy (NV) centers \cite{Ajoy2017,Kuwahata2020}, respectively. More recently, an experiment using a similar formalism to distinguish different sources of magnetic fields \cite{Bate2025} has been realized experimentally using trapped ion sensors.
 
In our work we start with the simple problem of the finite differences method to calculate linear estimators of higher order derivatives of the field in one dimension. The fact that these estimators are linear in the local values of the field, accessible by the quantum sensors, is again what provides the quantum advantage in the estimation task \cite{Proctor2018d,Eldredge2018,Bugalho2025}. This provides a good intuition on how to read and unravel the forthcoming results. We use the analysis of this problem and generalize first for finite differences in higher dimensions, using tools from algebraic geometry. As the main object of study in algebraic geometry are polynomials, we use these tools to generalize our results to polynomial interpolation of functions, in the domain where the polynomial approximation is valid. This problem has an intersection with previous works \cite{Sekatski2019,Rubio2020a,Qian2019,Qian2020,Bringewatt2021,Hamann2022,Hamann2023,Hamann2024}, where the authors have also treated the potential use cases of their methods to address the problem of finding the Taylor expansion coefficients of the field. This is a problem which is equivalent to the problem of interpolation. We extend over these past works to reach the full potential of the information Taylor expansion can provide in a distributed scenario. In particular, we find a general way to create linear estimators for a set of problems dealing only with polynomial fields, namely, multi-dimensional higher order derivatives, interpolation values of the field, interpolation values of derivatives of the field and any linear combination thereof. Importantly, we also discuss and provide results for the error in the construction of the linear estimators in these problems, which is not obvious from the results in \cite{Sekatski2019,Hamann2022,Hamann2023,Hamann2024}. This provides more insight in these methods, guaranteeing that no errors arise from the information not discernible by the quantum sensors.

We then use this formalism for polynomial interpolation and extend it to analytical fields \cite{Qian2019,Qian2020,Bringewatt2021}, where the available functions (i.e. those used to decompose the field) are no longer just polynomial functions, but any analytical function. In doing so, we recover the \textit{decoherence free subspaces} from \cite{Sekatski2019,Hamann2022,Hamann2023}, where the problem was stated as signal isolation. Given that polynomials are analytical functions, the previous problem is contained as well in this new problem. In fact, we go a step even further, to what we believe ends the set of chained problems that can be express in this form, by considering least-squares methods. This expands on the idea of modeling the field with a set of functions, and facilitates the use of additional statistical tools, such as regularization techniques. In particular, we find a set of problems, chained in inclusions, meaning the previous problem is included in the following problem, by restricting the assumptions on the functions and properties of the problem:

\begin{equation}
	\mathsf{Interpolation} \subset \mathsf{Signal-Isolation} \subset \mathsf{Least-Squares}
\label{eq:inclusion}
\end{equation}

The structure of the paper is as follows: we start by detailing the general problem of field sensing in Section~\ref{section:field}, which is the main framework of this paper. Then we go through the problem of interpolation in Section~\ref{section:interpolation}, providing exact solutions with necessary and sufficient conditions for the existence of a solution of the problem. We then analyze the case for analytical fields, starting with the signal isolation problem in Section~\ref{section:signal} and finally leading to the least-squares method in Section~\ref{section:leastsquares}, providing as well necessary and sufficient conditions for the existence of a solution. This order comes from the order introduced above in Eq.~\ref{eq:inclusion}. Finally we discuss the meaning of the non-solvability of the problem in Section~\ref{section:inverting}, how to construct the problem in a way that by default creates \textit{error-free subspaces}. These subspaces depend on the variables of the sensors, \textit{e.g.} the positions, and allow one to find a solution of the problem that by construction does not involve an error propagated from what the model cannot discern. We also provide examples for each of the categories of these problems and compare the best quantum non-local strategy with the best quantum local strategies and the best classical strategies.

\section{Field Sensing}\label{section:field}

We start by defining exactly what we mean by spatial sensing. We will do it in a general way, so many problems can fit within this framework. The main idea lies in having a function that takes each point over a space, or a domain, and attributes them a value. The points in the domain are variables of each of the parties, such as their position. In principle these values can be unknown, such as private location data, but this goes beyond the scope of this work and would require alternative methods than the ones introduced here. Hence we will assume knowledge of the positions / points on the domain. Then, what each party has access to, or is able to encode in each of the sensors, is the value of the function, or field, at their point in the domain, or position. 

In most regular scenarios, the dimension of our problem is an integer $m$, and therefore we are working with a domain $D \subseteq \mathbb{K}^m$. Note that for the most common problems that fit this field sensing scenario, $\mathbb{K}$ is usually given by the reals $\mathbb{R}$, as it defines some spatial dimension, and $m$ ranges from 1 to 3.

The first component is what can be called the model of our data, a subset of analytical functions over the domain, $\mathcal{F} = \{ f_1, f_2, \cdots , f_k \} \subseteq C^\omega (D) $, where $C^\omega$ is the standard notation for analytical functions over some domain $D$. These functions model our field, in the sense that the field can potentially be written as a linear combination of the functions in $\mathcal{F}$:

\begin{equation}
\begin{aligned}
    \tilde{F} : D \subseteq \mathbb{K}^m &\longrightarrow \mathbb{R} \\
    x &\longmapsto F(x) 
\end{aligned} 
\ , \qquad \tilde{F}(x) = \sum_{j=1}^k \beta_j f_j(x), \qquad f_j \in \mathcal{F} ,
\label{eq:fielddefinition} 
\end{equation}
similar to the works of \cite{Sekatski2019,Qian2019,Qian2020,Bringewatt2021,Hamann2023,Hamann2024}. This encompasses many situations of interest,  namely a field with several sources of signal, as well as series expansions such as the Taylor expansion, where the functions $f_j$ are monomials, and the Fourier expansion, where the functions $f_j$ are sines and cosines with different frequencies. 

The second component is access to the field in different points of its domain where our sensors sit. 
For this we use a set $X = \{x_1, x_2, \cdots , x_p\}$ which is a discrete collection of points in $D$, $X\subset D$, with size $p$. Moreover, each party has access to a quantum channel that is able to interrogate or encode the value of the field at their point in the domain. This is the link to a quantum estimation scenario. This interrogation is done by applying the channel over a quantum state described by a density matrix $\rho$ that can be entangled over several sensors, which we call a non-local resource. In particular, each of these quantum channels can be described by some quantum dynamics, for example:
\begin{equation}
\begin{aligned}
	\Lambda_x (\rho) \equiv \Lambda_{F(x)} (\rho) = e^{i F(x) H_x} \rho e^{-i F(x) H_x} .
\end{aligned}
\end{equation}

\begin{figure}[t]
\centering
\subfloat{\includegraphics[width=0.75 \linewidth]{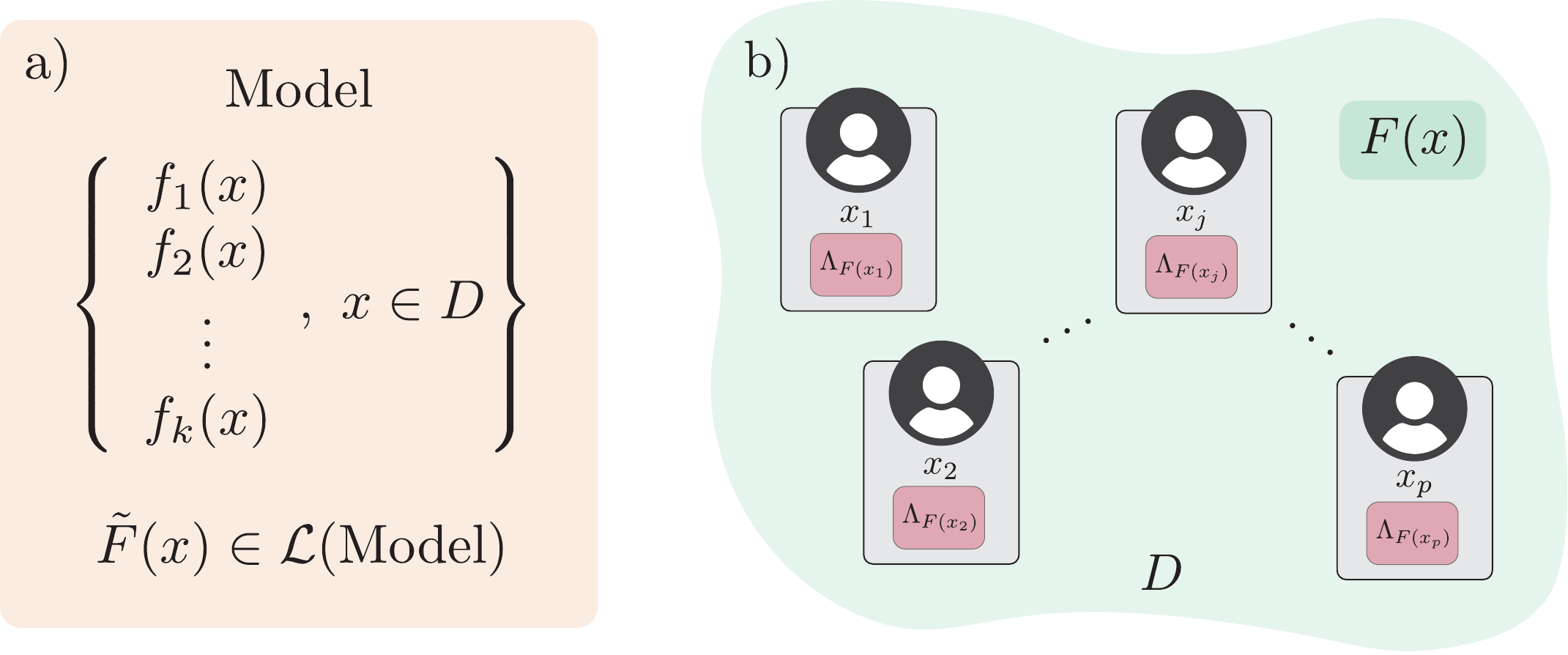}} 
\caption{Schematic representation of the different parts of the problem setup. \textit{a)} Mathematical model defined by a set of functions $ \{f_1(x), f_2(x), \dots, f_k(x)\} $ over a domain $ D $. The estimator for the field, or underlying function $\tilde{F}$ belongs to the linear span of the model functions $\mathcal{L}(\text{Model})$. \textit{b)} All parties have access to the underlying field $F(x)$ at their own points in the domain $ X = \{ x_1, x_2, \dots, x_j, \dots, x_p \} \in D$. In particular, they have access to some channel that encodes the outputs $ \{ \Lambda_{F(x_1)},\Lambda_{F(x_2)}, \dots, \Lambda_{F(x_j)}, \dots, \Lambda_{F(x_p)} \}$.}
\label{fig:scheme}
\end{figure}  

Under these assumptions over the quantum dynamics, it is known that an entangled state with access to each of the local channels, and control over a coupling constant of each the Hamiltonian (equivalent to mapping $H_x \rightarrow c_x H$) will be optimal at estimating the linear function $\boldsymbol{c}\cdot \boldsymbol{F} = \sum_j c_{x_j}F(x_j) \equiv \sum_j c_{j}F(x_j)$ \cite{Eldredge2018,Proctor2018d,Sekatski2019,Rubio2020a,Bugalho2025}. This means that, if one finds a linear function over the values of the field for a target property of the field, then one is able to find an appropriate initial quantum state (generally engantled) maximizing the precision of estimating the target property of the function. The advantage comes in terms of precision of the estimator for $\boldsymbol{c}\cdot \boldsymbol{F}$, and depends on the vector $ \boldsymbol{c}$. In particular, the gain in precision is lower-bounded by $\sqrt{n}$ when $\boldsymbol{c}$ has only one non-zero entry (local linear function), using $n$ local resources. On the other hand, the upper bound in precision is extended to $\sqrt{n \cdot p}$, again for $n$ local resources and $p$ entangled sensors, when $\boldsymbol{c} \propto \vec{1}$. This recovers the Heisenberg scaling on the total number of resources $N=n\cdot p$. The exact form of the advantage depends on the constraints of the resources, and the values of $\boldsymbol{c}$, and is thus different case by case. Nevertheless it can always be found using our framework developed here. We shine a light on this in one of the examples of this work (see Section~\ref{section:interpolationexample}), but leave a more detailed analysis for future work. 

To this end, the goal then becomes finding the set of problems, and how to build them, such that 
 they can be viewed as the problem of estimating a linear function of the field  at the points of the detectors $\boldsymbol{c}\cdot \boldsymbol{F}$. The details of the problem and property of interest will define the linear function $\boldsymbol{c}$, as well as give us conditions on the position and number of sensors that are needed. This will depend on the assumptions that we have about the functions $\mathcal{F}$ as well as the property that we are interested in, and the constraints on the sensor locations and resources. The approach and tools we put to this have foundations in analytical and algebraic geometry. 

The main intuition behind every case we will discuss, wether it be the interpolation problem, the signal-isolation or the least-squares, is finding the closest function in the linear span of our model, that best describes the observed values. The solution of this problem can always be cast as the solution to a linear problem as follows:

\begin{equation} \label{eqn global problem}
\boldsymbol{X} \boldsymbol{\beta}= \boldsymbol{F} \Leftrightarrow 
\begin{pmatrix}
f_1(x_1) & f_2(x_1) & \cdots & f_k(x_1) \\
f_1(x_2) & f_2(x_2) & \cdots & f_k(x_2) \\
\vdots   & \vdots   & \ddots & \vdots   \\
f_1(x_p) & f_2(x_p) & \cdots & f_k(x_p)
\end{pmatrix} 
\begin{pmatrix}
\beta_1 \\
\beta_2 \\
\vdots \\
\beta_k
\end{pmatrix} =
\begin{pmatrix}
F(x_1) \\
F(x_2) \\
\vdots \\
F(x_p)
\end{pmatrix}
\end{equation}
where $\boldsymbol{X}$ is the matrix in the above equation, $\boldsymbol{\beta}$ is the vector of the model coefficients, and $\boldsymbol{F}$ is the vector of the evaluations of the field, or the local values of the field.

By inverting this system we have an estimator for the vector $\hat{\boldsymbol{\beta}}$, which implicitly defines the closest function $\tilde{F}(x)$ to $F(x)$ that lies inside of the linear span of our model. Here, the notion of closeness is given by the minimization of the squared deviation between the local data and the expected data from the model. This means that, if one is able to write each $\beta_i = \boldsymbol{c}_i \cdot \boldsymbol{F}$, then finding the value of the field at some point $\tilde{F}(x)$ amounts to estimating 
\begin{equation}
\begin{aligned}
    \tilde{F}(x) &= \sum_i \beta_i f_i(x)= \sum_i ( \boldsymbol{c}_i \cdot \boldsymbol{F} ) f_i(x) = \sum_i ( f_i(x) \boldsymbol{c}_i ) \cdot \boldsymbol{F} = \boldsymbol{c} \cdot \boldsymbol{F},
\end{aligned}
\end{equation}
which is exactly the form we want, where we know the optimal quantum strategy and how entanglement helps. Similarly one can form an expression for the derivative at any point, e.t.c. 
In this way, the problem reduces to finding a solution to equation (\ref{eqn global problem}). The form or constraints on when this can be done is one of the main questions addressed in this work, and it depends on the assumptions on the set of functions in our model $\mathcal{F}$ and the position and number of sensors. 
If the values of the field have some statistical error, this can also be incorporated by defining a pseudo-inverse taking this into account. Depending on the different assumptions of how we build our model and how many data points we have, we will build the different problems and we will be able to provide statements over what the model is able to discern or not, depending on the positions of the sensors.

We will start by analyzing the simplest case, which is that of interpolation, in the upcoming section. One can also think of the interpolation problem as finding the Taylor expansion coefficients of the field in question, which allows us to deal with numerical derivatives or linear functions of the numerical derivatives of the field. This is a consequence of interpolation being an estimation technique that deals only with polynomials, a subset of analytical functions that constitute the core of algebraic geometry.
From there in later sections we address the problem when we know the functions are from a set of more general algebraic functions, and finally least squares.

\section{Interpolation}\label{section:interpolation}

In this section we will cover the interpolation problem, which is the simplest to analyze, and where we can find concrete tools that grant a well-defined model, and its consequences. At the same time, this is a useful tool for when little to no \textit{a-priori} information about the field is known, besides that it is an analytical function. If no other information is known, then there should be no privileged set of functions for the model over another. Therefore, in this section the goal is given access to the value of the field in a set of points, find a polynomial interpolation of the field. 

By taking the Taylor expansion we are choosing a model where the functions are monomials in increasing exponents order, which is something that is available from the definition of an analytical function: it admits a Taylor expansion around a neighborhood of the points at which the function is analytical. When a field is a polynomial, finding the interpolating function using $p$ points is the same problem as finding the Taylor expansion of the function up to $p$-orders of expansion. 

In this section we will start by addressing how to properly do a Taylor expansion in multiple dimensions. Then we use this to calculate the expansion coefficients from the local values accessible by the sensors, which amounts to the inversion of a generalized Vandermonde matrix. In doing so, we will be able to write the interpolation of the field in arbitrary positions (as well as derivatives of the interpolation of the field) as a linear function of the local values. Additionally, we discuss the impact of the positions of the sensors in the existence of an error-free solution to the problem. This grants the existence of a linear estimator for the interpolation problem.

\subsection{Taylor Expansion}
In one dimension, this is a simple problem; however, for multiple dimensions, the question of how to find the increasing orders arises. Starting with a simple example in 1D, the Taylor expansion of $F(x)$ around some point $x_0$ up to order $k$ is given by:

\begin{equation}
	F(x)=\sum_{j=0}^{k-1} \frac{ F^{(j)} (x_0) }{j!} \delta x^j + \mathcal{O}\left(\delta x^k \right), \delta x = x -x_0.
	\label{eq:expansion1D}
\end{equation}
In this case, the order is well defined, as the exponents are given by the integers, which have a total order $\leq$, meaning we can put them in a sequence of increasing numbers. When considering domain with higher dimensions, this is no longer true. A consequence of this is that the order of the errors, or the order of the expansion, can have several options. This is linked with the fact that the product order $\preceq$ is known not to be a total order. To deal with this we use a concept from order theory that will help define a proper a multivariate Taylor expansion. The multivariate derivatives in $m$-dimensions can be labeled by an $m$-length vector of integers $\boldsymbol{\alpha}$, where the component $j$ corresponds to the degree of the derivative in the dimension $x_j$. In doing so, let us use the common notation $D^{\boldsymbol{\alpha}}$, where $\boldsymbol{\alpha} \in \mathbb{N}_0^m$, to denote the derivative $\partial_{x_{1}}^{\alpha_1} \circ \partial_{x_{2}}^{\alpha_2} \circ \cdots \circ \partial_{x_{m}}^{\alpha_m}$. When doing a Taylor expansion in any dimension, a proper one guarantees that lower order derivatives are accounted for before increasing the degree in any direction. Because of this we will use the some definitions to help define a proper multivariate Taylor expansion:

\begin{definition}[Lower Set]
Let $(P,\preceq)$ be a partially ordered set. A lower set is a subset $I \subseteq P$ such that:
\begin{enumerate}
	\item $I$ is non-empty
	\item $\forall \ x \in I , y \in P$, $y \preceq x$ implies $y \in I$
\end{enumerate}
\label{def:lowerset}
\end{definition}
Additionally define:
\begin{definition}[Border and Cover]
Let $L$ be a lower set in a partially ordered set $(P,\preceq)$ and $P_L$ the subset of $P$ defined by excluding $L$, $P_L = P\setminus L$.
\begin{enumerate}
	\item Let $\partial L \subseteq L$ be the border of $L$ : $\partial L = \{ x \in L \ | \not\exists z \in L\setminus \{x\} : x \prec z \}$. 
	\item Let $\mathsf{Cov}(L)$ be the cover of $L$ : $\mathsf{Cov}(L)  =  \{ x \in P_L \ | \ \exists z \in L :  z \precdot x  \}$.
\end{enumerate}
\end{definition}

\begin{example}\label{ex:lowerset}
We provide three examples of lower sets of $\mathbb{N}_0^2$ in black, and highlight in orange the cover of the lower sets. The points surrounded by gray belong to the border of the lower set as well.
\end{example}

\vspace{0.2cm}

\begin{tikzpicture}[scale=0.55]
\draw[help lines, color=gray!30, dashed] (0,0) grid (5.9,5.9);
\foreach \x in {0,1,2,3,4 }{
   \foreach \y in {0,1,2,3,4}{
	\draw[black,fill=black] (\x,\y) circle (.75ex);
    }
}
\foreach \Point in {(0,5),(1,5),(2,5),(3,5),(4,5),(5,4),(5,3),(5,2),(5,1),(5,0)}{
    \draw[orange,fill=orange] \Point circle (.75ex);
}
\draw[gray] (4,4) circle (1.25ex);
\node at (8,3) {$$};
\end{tikzpicture}
\begin{tikzpicture}[scale=0.55]
\draw[help lines, color=gray!30, dashed] (0,0) grid (5.9,5.9);
\foreach \x in {0,1,2,3}{
   \foreach \y in {0,1,2,3}{
    	\draw[black,fill=black] (\x,\y) circle (.75ex);
    }
}
\foreach \Point in {(1,1), (1,2), (1,3), (1,4), (0,4), (2,1), (2,2), (2,3) ,(3,1), (3,2), (3,3), (4,1),(4,0)}{
   \draw[black,fill=black] \Point circle (.75ex);
}
\foreach \Point in {(0,5), (1,5), (2,4), (3,4), (4,3),(4,2),(5,1),(5,0)}{
    \draw[orange,fill=orange] \Point circle (.75ex);
}
\draw[gray] (4,1) circle (1.25ex);
\draw[gray] (3,3) circle (1.25ex);
\draw[gray] (1,4) circle (1.25ex);
\node at (8,3) {$$};
\end{tikzpicture}
\begin{tikzpicture}[scale=0.55]
\draw[help lines, color=gray!30, dashed] (0,0) grid (5.9,5.9);
\foreach \x in {4,...,0}{
   \foreach \y in {\x,...,0}{
    	\draw[black,fill=black] (4-\x,\y) circle (.75ex);
    }
}
\foreach \x in {5,...,0}{
    	\draw[orange,fill=orange] (\x,5-\x) circle (.75ex);
}
\foreach \x in {4,...,0}{
    	\draw[gray] (\x,4-\x) circle (1.25ex);
}
\end{tikzpicture}

\vspace{0.5cm}

Note that, if we are working with vectors of integers or reals, equipped with the regular product order, the border of any lower set can be seen as the Pareto front of the subset, this is, the set of non-dominating vectors. Using this let us write a proper Taylor expansion for the multivariate case, given a lower set $L \subset \mathbb{N}_0^m$:

\begin{equation}
	F(\boldsymbol{x})=\sum_{\boldsymbol{\alpha} \in L} \frac{D^{\boldsymbol{\alpha}} F(\boldsymbol{x}_0)}{\boldsymbol{\alpha} !}\delta\boldsymbol{x}^{\boldsymbol{\alpha}}+\sum_{\boldsymbol{\alpha}\in \mathsf{Cov}(L)} h_{\boldsymbol{\alpha}}(\boldsymbol{x}) \delta\boldsymbol{x}^{\boldsymbol{\alpha}},
\label{eq:taylormulti}
\end{equation}
where $\delta \boldsymbol{x} = \boldsymbol{x} - \boldsymbol{x}_0$ and using the notation $\boldsymbol{\alpha} = (\alpha_1, \alpha_2, \cdots , \alpha_m)$, $\boldsymbol{\alpha}!=\alpha_1!\alpha_2!\cdots \alpha_m!$ and $\boldsymbol{y}^{\boldsymbol{\alpha}} = y_1^{\alpha_1} y_2^{\alpha_2} \cdots y_n^{\alpha_n}$. Note that in Example \ref{ex:lowerset}, the leftmost and rightmost lower sets are equivalent to imposing $\norm{\boldsymbol{\alpha}}_\infty \leq k$ and $\norm{\boldsymbol{\alpha}}_1 \leq k$, respectively, for some $k$ given by the number of points.
We exemplify the Taylor expansion for $m=2$ using a lower set similar to the rightmost of Example \ref{ex:lowerset}, but for $\norm{\boldsymbol{\alpha}}_1 \leq 2$:
\begin{equation}
\begin{aligned}
	F(x,y) &\approx F(x_0,y_0) + \frac{\partial F(x_0,y_0)}{\partial x} (x-x_0) + \frac{\partial F(x_0,y_0)}{\partial y} (y-y_0) + \\
	 &\frac{\partial^2 F(x_0,y_0)}{\partial x^2} \frac{(x-x_0)^2}{2!} + \frac{\partial^2 F(x_0,y_0)}{\partial y^2} \frac{(y-y_0)^2}{2!} + \frac{\partial^2 F(x_0,y_0)}{\partial x \partial y} (x-x_0)(y-y_0) +\cdots
\end{aligned}
\end{equation}

If one were to omit a term such as a first order derivative in the above example, then the expansion would have an error in a lower order. This is the reason behind using a lower set for the orders of the expansion. Next we discuss how to find the derivatives of the function from the Taylor expansion.

\subsubsection{Expanding to the Derivatives of a Function}\label{section:derivatives}

While Taylor theorem tells us that one can find the approximation up to some order of a function at a point by looking at its derivatives in a point close enough, and constructing a polynomial approximation of this, the polynomial approximation also admits a derivative up to a certain degree. So, if one is interested in calculating the approximations of a derivative at a point, one can simply do the following:

\begin{equation}
	D^{\zeta} F(\boldsymbol{x}) =\sum_{\boldsymbol{\alpha} \in L, \boldsymbol{\zeta} \preceq \boldsymbol{\alpha} } \frac{D^{\boldsymbol{\alpha}} F(\boldsymbol{x}_0)}{(\boldsymbol{\alpha}-\boldsymbol{\zeta}) !}\delta\boldsymbol{x}^{\boldsymbol{\alpha}-\boldsymbol{\zeta}} + \order{ \sum_{\boldsymbol{\alpha} \in \mathsf{Cov(L)}}  \delta\boldsymbol{x}^{\boldsymbol{\alpha}-\boldsymbol{\zeta}}}
\label{eq:expansionderivatives}
\end{equation}

Note that this is very similar to Eq.\ref{eq:taylormulti}, without summing some orders of expansion, and with different coefficients weighting each derivative. This results from acting $D^{\boldsymbol{\eta}}$ on Eq.~\ref{eq:taylormulti}.

\subsection{Transforming the Expansion in a Linear Problem}

As noted before the quantum sensors we will have access to the field at a given collection of points. Let us collect these values on a vector as follows: $\boldsymbol{F} = \{F(x_j)\}_{x_j \in X}$. The goal in this section is to transform the problem of calculating first a derivative of the field $F(x)$ at a given location of a sensor $x_j$, and then generalize to evaluating the field or a derivative of the field outside of $X$ by interpolation, and then generalize even more to any linear combination thereof. The last part comes naturally since linear combinations of linear combinations are still linear.

Finding the derivative at a given $x_j$, using a set of evaluations of a function at different points can be posed as a numerical differentiation problem. To deal with this problem one can use, for example, a finite differences method. For simplicity let us start with the 1D case of a set of $p$ points across one line and expand the field in each location around $x_j$, using $\delta x_{ij}$ to denote the difference $x_i - x_j$:
\begin{equation}
\begin{aligned}
	F(x_1) = F(x_j + \delta x_{1j}) \approx \ & F(x_j) +  F'(x_j) \frac{\delta x_{1j}}{1!} + F''(x_j) \frac{\delta x_{1j}^2}{2!}  + ... \\
	F(x_2) = F(x_j+ \delta x_{2j}) \approx \ & F(x_j) +  F'(x_j) \frac{\delta x_{2j}}{1!} + F''(x_j) \frac{\delta x_{2j}^2}{2!} + ... \\
	&\cdots \\
	F(x_p)= F(x_j+  \delta x_{pj}) \approx \ & F(x_j) +  F'(x_j) \frac{\delta x_{pj}}{1!} + F''(x_j) \frac{\delta x_{pj}^2}{2!} + ... 	
\end{aligned}
\label{eq:expansions}
\end{equation}
Then, the idea is inverting this problem, by finding a linear combination of the $F(x_i)$ such that:

\begin{equation}
	\boldsymbol{c}\cdot \boldsymbol{F} = c_1 F(x_1) + c_2 F(x_2) + \cdots + c_p F(x_p) \approx F^{(r)} (x_j)  
	\label{eq:findingDerivative}
\end{equation}
For some $0\leq r < p$. The coefficients $c_j$ can be calculated by substituting Eq. ~\ref{eq:expansions} into Eq.~\ref{eq:findingDerivative}. After this substitution, finding the coefficients is equivalent to solving a linear system with a matrix, which is often denoted as the Vandermonde matrix:

\begin{equation}
\begin{aligned}
	\begin{pmatrix}
	1 & 1 & \cdots & 1 \\
	\delta x_{1j} & \delta x_{2j} & \cdots & \delta x_{pj} \\
	\vdots & \vdots & & \vdots \\
	\delta x_{1j}^r & \delta x_{2j}^r & \cdots & \delta x_{pj}^r \\ 
	\vdots & \vdots & & \vdots \\
	\delta x_{1j}^{p-1} & \delta x_{2j}^{p-1} & \cdots & \delta x_{pj}^{p-1} 
	\end{pmatrix}
	\begin{pmatrix}
	c_1 \\
	c_2 \\
	\vdots \\
	c_r \\
	\vdots \\
	c_p
	\end{pmatrix}
	=
	\begin{pmatrix}
	0 \\
	0 \\
	\vdots \\
	r! \\
	\vdots \\
	0
	\end{pmatrix}
\end{aligned}
\Leftrightarrow V(\delta_j X)^T \boldsymbol{c} = r! \boldsymbol{e}_r 
\label{eq:vandermonde1D}
\end{equation}
Where $V(\delta_j X)$ is the Vandermonde matrix taking data from a set of points $\delta_j X = \{ x-x_j\}_{x\in X}$. Note that this matrix is square and invertible if and only if every $\delta x_{ij}$ is different, meaning always choosing different points for the location of sensors. Later one we will provide a proof for this and the more general case of a multivariate expansion. Moreover, this will imply an error of the approximation of the derivative in the order of $\order{\delta x_{ij}^p}$, which is expected from Eq.~\ref{eq:expansion1D}.

The case of multiple dimensions, the problem is more complex. While there seems to be plenty of literature for finite differences methods in one dimension, for more than one dimension, an equivalent solution using a generalized Vandermonde matrix seems not to be commonly analyzed in the literature, even though some early work shines a light on this extension \cite{Buck1992,Lundengard2017}. We believe there are two reasons for this: defining a proper expansion and the conditions for invertibility of the generalized Vandermonde matrix. The first problem, we have already addressed by introducing lower sets. The second problem we will solve using algebraic geometry, which will provide some insight to deal with the more general problem of any analytical function.

To build the generalized Vandermonde matrix in $m$ dimensions, take a lower set $L = \{ \boldsymbol{\alpha}_1, \boldsymbol{\alpha}_2, \cdots , \boldsymbol{\alpha}_p\} \subseteq \mathbb{N}_0^m$ of size $p$ for the expansion derivatives coefficients, and a set of points $X =  \{ \boldsymbol{x}_1, \boldsymbol{x}_2, \cdots , \boldsymbol{x}_p\} \subset \mathbb{K}^m$ with the same size $p$. In the same way as before, construct the vector $\boldsymbol{F}$ by evaluating the field $F$ at each of the points in $X$, $\boldsymbol{F}=\{ F(\boldsymbol{x})  \}_{\boldsymbol{x}\in X}$. Expanding each of them around $\boldsymbol{x}_j$, using again the notation $\delta \boldsymbol{x}_{ij} = \boldsymbol{x}_{i} - \boldsymbol{x}_{j} $:

\begin{equation}
\begin{aligned}
	F(\boldsymbol{x}_1) = F(\boldsymbol{x}_j + \delta \boldsymbol{x}_{1j}) \approx \ &  \sum_{\boldsymbol{\alpha} \in L} D^{\boldsymbol{\alpha}}F(\boldsymbol{x}_j) \frac{(\delta \boldsymbol{x}_{1j})^{\boldsymbol{\alpha}}}{\boldsymbol{\alpha}!} + \epsilon (\boldsymbol{\alpha},  \boldsymbol{x}_1,\boldsymbol{x}_j) \\
	F(\boldsymbol{x}_2) = F(\boldsymbol{x}_j + \delta \boldsymbol{x}_{2j}) \approx \ &  \sum_{\boldsymbol{\alpha} \in L} D^{\boldsymbol{\alpha}}F(\boldsymbol{x}_j) \frac{(\delta \boldsymbol{x}_{2j})^{\boldsymbol{\alpha}}}{\boldsymbol{\alpha}!} + \epsilon (\boldsymbol{\alpha},  \boldsymbol{x}_2,\boldsymbol{x}_j) \\
	&\cdots \\
	F(\boldsymbol{x}_p) = F(\boldsymbol{x}_j + \delta \boldsymbol{x}_{pj}) \approx \ & \sum_{\boldsymbol{\alpha} \in L} D^{\boldsymbol{\alpha}}F(\boldsymbol{x}_j) \frac{(\delta \boldsymbol{x}_{pj})^{\boldsymbol{\alpha}}}{\boldsymbol{\alpha}!} + \epsilon (\boldsymbol{\alpha},  \boldsymbol{x}_p,\boldsymbol{x}_j) 
\end{aligned}
\label{eq:expansionsMulti}
\end{equation}
The problem is again finding the correct linear combination of $F(\boldsymbol{x}_i)$ such that:
\begin{equation}
	\boldsymbol{c}\cdot \boldsymbol{F} = c_1 F(\boldsymbol{x}_1) + c_2 F(\boldsymbol{x}_2) + \cdots + c_p F(\boldsymbol{x}_p) \approx  D^{\boldsymbol{r}} F (\boldsymbol{x}_j) 
\label{eq:taylormulticoeff}
\end{equation}
For some $\boldsymbol{r} \in L$. Using the same notation as before $\boldsymbol{x}^{\boldsymbol{\alpha}} = x_1^{\alpha_1} x_2^{\alpha_2} \cdots x_m^{\alpha_m}$ and $\delta_j X = \{\boldsymbol{x} - \boldsymbol{x}_j\}_{x\in X}$, one arrives at the following system of equations:
\begin{equation}
\begin{aligned}
	V(\delta_j X,L)^T \boldsymbol{c} =  \boldsymbol{r}! \boldsymbol{e}_{\boldsymbol{r}}, \qquad \big( V(\delta_j X,L) \big)_{ik} = \delta \boldsymbol{x}_{ij}^{\boldsymbol{\alpha}_k} = ( \boldsymbol{x}_{i} - \boldsymbol{x}_{j} )^{\boldsymbol{\alpha}_k}
\end{aligned}
\label{eq:vandermonde}
\end{equation}

To grasp this equation a little better, let us present an example:

\begin{example}
Let $X$ and $L$ be given by:
\begin{equation}
X = \left(\begin{array}{ccccc}
x_0 & x_1 & x_2 & x_3 & x_4 \\
y_0 & y_1 & y_2 & y_3 & y_4
\end{array}\right) \quad, \quad
L = \left(\begin{array}{ccccc}
0 & 1 & 2 & 0 & 1 \\
0 & 0 & 0 & 1 & 1
\end{array}\right) \equiv
\begin{tikzpicture}[scale=0.5,baseline={([yshift=-.5ex]current bounding box.center)},vertex/.style={anchor=base,
    circle,fill=black!25,minimum size=18pt,inner sep=2pt}]
\draw[help lines, color=gray!30, dashed] (0,0) grid (2.9,1.7);
\foreach \Point in {(0,0),(1,0),(2,0),(0,1),(1,1)}{
	 \draw[black,fill=black] \Point circle (.75ex);
}
\node [black] at (0,-0.3) {};
\end{tikzpicture}
\end{equation}
Then, the generalized Vandermonde matrix of this system is given by:
\begin{equation}
V(X,L)=
\left(\begin{array}{ccccc}
1 & x_0 & x_0^2 & y_0 & x_0 y_0 \\
1 & x_1 & x_1^2 & y_1 & x_1 y_1 \\
1 & x_2 & x_2^2 & y_2 & x_2 y_2 \\
1 & x_3 & x_3^2 & y_3 & x_3 y_3 \\
1 & x_4 & x_4^2 & y_4 & x_4 y_4  
\end{array}\right)
\end{equation}
This is equivalent to choosing the functions in the model to be given by $\{1,x,x^2,y,xy\}$.
\end{example}

As we have said before, for 1 dimension this matrix was a regular Vandermonde matrix, and we were granted invertibility simply from choosing all of the points to be different, $i.e.$ choosing different positions for our sensors. In this case, this condition is still necessary (to check this just see what happened to the matrix above if two points are equal), but it is no longer sufficient. One can rewrite the conditions of invertibility of the generalized Vandermonde matrix in an algebraic geometry problem, and use appropriate tools to study the invertibility. Let us make some statements, starting by an informal theorem, and then a formal condition with proof of sufficiency for the invertibility of the generalized Vandermonde matrix:

\begin{theorem}[Informal]\label{thm:informal}
Let $V(X,L)$ be a generalized Vandermonde matrix with $X$ a set of points in a $m$-dimensional space, and $L$ a set of different monomials in $m$ variables. $V(X,L)$ is invertible if and only if no polynomial spanned by $L$ vanishes for all $x\in X$.
\end{theorem}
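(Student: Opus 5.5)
The statement is, at bottom, a restatement of elementary linear algebra once we identify what a column combination of $V(X,L)$ means. The plan is to set up the dictionary between coefficient vectors and polynomials and then apply the rank--nullity characterization of invertibility for square matrices. The phrase ``no polynomial spanned by $L$ vanishes'' must be read as ``no \emph{nonzero} polynomial in the span of the monomials $\{\boldsymbol{x}^{\boldsymbol{\alpha}}\}_{\boldsymbol{\alpha}\in L}$ vanishes on all of $X$.'' We also assume, as in the construction above, that $|X| = |L| = p$, so that $V(X,L)$ is square.

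\textbf{Dictionary.} For a coefficient vector $\boldsymbol{a}=(a_1,\dots,a_p)^T\in\mathbb{K}^p$, define the polynomial
\begin{equation*}
P_{\boldsymbol{a}}(\boldsymbol{x})=\sum_{k=1}^p a_k \boldsymbol{x}^{\boldsymbol{\alpha}_k}.
\end{equation*}
Since $V(X,L)_{ik}=\boldsymbol{x}_i^{\boldsymbol{\alpha}_k}$, the $i$-th entry of $V(X,L)\boldsymbol{a}$ is exactly $P_{\boldsymbol{a}}(\boldsymbol{x}_i)$. So $V(X,L)\boldsymbol{a}$ is the vector of evaluations of $P_{\boldsymbol{a}}$ on $X$. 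The monomials indexed by the distinct exponents $\boldsymbol{\alpha}_k$ are linearly independent in $\mathbb{K}[x_1,\dots,x_m]$, so $\boldsymbol{a}\mapsto P_{\boldsymbol{a}}$ is a linear isomorphism from $\mathbb{K}^p$ onto the span of $L$. In particular, $P_{\boldsymbol{a}}=0$ if and only if $\boldsymbol{a}=0$.

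\textbf{The two directions.}
\begin{enumerate}
\item Suppose $V(X,L)$ is singular. Then there is $\boldsymbol{a}\neq 0$ with $V(X,L)\boldsymbol{a}=0$, so $P_{\boldsymbol{a}}$ is a nonzero polynomial in the span of $L$ vanishing at every point of $X$.
\item Conversely, suppose a nonzero $P$ in the span of $L$ vanishes on $X$. Write $P=P_{\boldsymbol{a}}$ with $\boldsymbol{a}\neq0$. Then $V(X,L)\boldsymbol{a}=0$, so the kernel is nontrivial and the square matrix $V(X,L)$ is not invertible.
\end{enumerate}
Taking contrapositives of both directions gives the stated equivalence.

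\textbf{Variants.} The same argument covers the translated matrices $V(\delta_j X,L)$ used in Eq.~\ref{eq:vandermonde}, since the result holds for an arbitrary point set. It also covers the transpose $V^T$, because a square matrix and its transpose are invertible together.

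\textbf{Main obstacle.} The proof itself has no real difficulty; the obstacle is making the informal statement precise. This means making explicit the nonzero requirement, the equal cardinalities, and the linear independence of distinct monomials over $\mathbb{K}$; the last is immediate because polynomials are identified with their coefficient vectors. I would note that the substantive difficulty the paper is pointing toward lies elsewhere: turning ``no nonzero polynomial in the span of $L$ vanishes on $X$'' into checkable geometric placement conditions for the sensors is the genuinely algebraic-geometric part, and it is not required for this equivalence.
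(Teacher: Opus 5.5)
Your proposal is correct and follows essentially the same route as the paper's proof of the formal version (Theorem~\ref{thm:invertibility1} in Appendix~\ref{appendix:vandermonde}): a vanishing linear combination of the columns of $V(X,L)$ is identified with a polynomial in $\mathsf{span}\, L$ that vanishes on $X$. Your explicit ``nonzero'' qualifier and your appeal to the linear independence of distinct monomials are precisely the points the paper glosses over, since its Condition~\ref{condition:invertibility1} literally reads $\mathsf{span}\, L \cap I(X) = \emptyset$, which must be understood as $\mathsf{span}\, L \cap I(X) = \{0\}$.
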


We prove a more formal statement and do a brief introduction of important concepts required for the proof in Appendices \ref{appendix:vandermonde} and \ref{appendix:algebraic}, respectively. Using these results, we can already provide a statement for a certain type of sensor positions resulting in an invertible Vandermonde matrix, and, therefore, a well defined model:

\begin{proposition}\label{prop:invertibilityX}
Let $X \subset \mathbb{N}_0^m$ be a lower set for some arbitrary dimension $m$. $V(X,X)$ is invertible.
\end{proposition}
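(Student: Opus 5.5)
By Theorem~\ref{thm:informal}, invertibility of $V(X,X)$ is equivalent to the following claim: the only polynomial $P(\boldsymbol{x})=\sum_{\boldsymbol{\alpha}\in X} a_{\boldsymbol{\alpha}}\boldsymbol{x}^{\boldsymbol{\alpha}}$ that vanishes at every $\boldsymbol{x}\in X$ is the zero polynomial. Since $V(X,X)$ is square ($|X|=p$ points, $p$ monomials), I will prove exactly this, by induction on the dimension $m$. For $m=1$ a lower set is $X=\{0,1,\dots,p-1\}$. The matrix is then an ordinary Vandermonde matrix on distinct nodes, so it is invertible. Equivalently, a polynomial of degree at most $p-1$ with $p$ roots is zero.

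For the inductive step I slice along the last coordinate. Write $X=\bigcup_{t=0}^{T} X_t\times\{t\}$, where $X_t=\{\boldsymbol{\alpha}'\in\mathbb{N}_0^{m-1} : (\boldsymbol{\alpha}',t)\in X\}$. Two facts follow from the lower-set property (Definition~\ref{def:lowerset}). First, each nonempty $X_t$ is a lower set in $\mathbb{N}_0^{m-1}$. Second, the slices are nested, $X_0\supseteq X_1\supseteq\cdots\supseteq X_T$, and no slice is skipped: if $X_t\neq\emptyset$ then $X_s\neq\emptyset$ for all $s<t$. Now expand
\[
P(\boldsymbol{x}',x_m)=\sum_{t=0}^{T} x_m^{t}\,Q_t(\boldsymbol{x}'),
\]
where $Q_t$ is spanned by the monomials $\boldsymbol{x}'^{\boldsymbol{\alpha}'}$ with $\boldsymbol{\alpha}'\in X_t$.

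I then peel off the top layer by a downward induction on $t$. Fix $\boldsymbol{\beta}'\in X_T$. By nestedness, $\boldsymbol{\beta}'\in X_t$ for every $t$, so the points $(\boldsymbol{\beta}',s)$ lie in $X$ for $s=0,\dots,T$. Hence the univariate polynomial $s\mapsto P(\boldsymbol{\beta}',s)$, which has degree at most $T$, vanishes at the $T+1$ distinct values $s=0,\dots,T$. It is therefore identically zero, and in particular its top coefficient vanishes: $Q_T(\boldsymbol{\beta}')=0$. This holds for every $\boldsymbol{\beta}'\in X_T$. Since $Q_T$ is spanned by the monomials indexed by the lower set $X_T$, the induction hypothesis in dimension $m-1$ gives $Q_T\equiv 0$.

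With $Q_T$ removed, $P$ involves only the layers $t<T$. It vanishes on $\bigcup_{t<T}X_t\times\{t\}$, and this set is again a lower set in $\mathbb{N}_0^m$. Repeating the argument (formally, an inner induction on $T$, or on $|X|$) yields $Q_t\equiv 0$ for all $t$, so $P=0$. This completes the inductive step.

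The main obstacle is the bookkeeping in the peeling step. One has to check carefully that nestedness of the slices really supplies the $T+1$ collinear nodes above each $\boldsymbol{\beta}'\in X_T$, and that removing the top layer leaves a lower set, so that the induction closes. An alternative route is to order $X$ compatibly with the partial order $\preceq$ (for instance lexicographically) and show that $V(X,X)$ factors through a triangular structure via Newton-type basis polynomials $\prod_{i}\prod_{j<\alpha_i}(x_i-j)$. These polynomials vanish at every $\boldsymbol{\gamma}\in X$ with $\boldsymbol{\gamma}\not\succeq\boldsymbol{\alpha}$, and are nonzero at $\boldsymbol{\gamma}=\boldsymbol{\alpha}$, which gives a unitriangular-up-to-scaling matrix after a change of basis from monomials. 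I would present this as a backup if the inductive argument becomes cumbersome.
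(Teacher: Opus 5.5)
Your proof is correct, but it follows a different route from the paper's. The paper argues ideal-theoretically. Following Example~\ref{example:borderideals}, it takes as generators of $I(X)$ the products of linear factors $\prod_{i}\prod_{j<\beta_i}(x_i-j)$, indexed by the corners $\boldsymbol{\beta}$ of the complement (exponents lying in $\mathsf{Cov}(X)$). Because these ``degrees'' lie outside $X$, it concludes that no nonzero polynomial supported on $\{\boldsymbol{x}^{\boldsymbol{\alpha}}:\boldsymbol{\alpha}\in X\}$ belongs to $I(X)$. That is Condition~\ref{condition:invertibility1}, and Theorem~\ref{thm:invertibility1} then gives invertibility.

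You never describe $I(X)$ at all. You slice along $x_m$, use the nestedness of the slices to get $T+1$ collinear nodes above each point of the top slice, and reduce everything to the univariate Vandermonde fact by a double induction on $m$ and on the number of layers.

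What your route buys:
\begin{itemize}
\item It is elementary and complete in every dimension. The paper's key step silently needs those products to form a Gr\"obner basis of $I(X)$, or a count such as $\dim \mathbb{K}[x_1,\dots,x_m]/I(X)=|X|$, since an ideal can contain elements supported on $X$ even when its generators' leading exponents avoid $X$. The paper does not supply this, and it only draws the construction for $m=2$.
\item Your argument uses only that the nodes are distinct in each coordinate. It therefore transfers verbatim to the relabeled sets $a(Y)$ of the subsequent theorem.
\end{itemize}

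What the paper's route buys is an explicit picture of the vanishing ideal. That picture is what links invertibility to $\mathsf{Cov}(X)$, and through it to the later discussion of interpolation error.

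Your backup Newton-basis argument is actually the closest to the paper. It uses the same products of linear factors, but indexed by $X$ itself, which gives a triangular evaluation matrix. The paper indexes them by the cover, which gives ideal generators instead.
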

\begin{proof}
The proof comes directly from constructing a set of ideals in the polynomial ring, in the same way as Example~\ref{example:borderideals}. Then one can verify Condition~\ref{condition:invertibility1} from the degrees of the ideals of the set $X$ only taking values from the cover $\mathsf{Cov}(X)$. This means that the minimum degree polynomial belonging to $I(X)$ has degree in $\mathsf{Cov}(X)$, and not in $X$.
\end{proof}

The above theorem says that positioning sensors in the lower set of a grid, grants invertibility for the corresponding lower set in $\mathbb{N}_0^m$. This means I can estimate the expansion coefficients without an inherent error arising from the non-invertibility of the Vandermonde matrix.

Unsurprisingly, there exists more possible set of points $X$ that still result in an invertible Vandermonde matrix for a given lower set $L$. This corresponds to finding positioning for the sensors that allow to evaluate the coefficients of the expansion without any error.
From the proofs presented in Appendices \ref{appendix:vandermonde} and \ref{appendix:algebraic} one can observe certain invariances which can be exploited to generalize our results. The first one corresponds to global contractions and translations of all the points, which should not affect the invertibility. Furthermore, looking at Example~\ref{example:borderideals} in Appendix~\ref{appendix:vandermonde}, we have defined the points arbitrarily, which actually means we do not need a notion of order on the points. This suggests that the set of points $X$ (sensor locations) needs to be transformable into a lower set $L$, via changing lines and columns. An example of these transformations:

\begin{example}[Transforming a set $X$ into a lower set $L$ in two dimensions]\label{example:paretoset}
Let $c_i$ and $r_i$ stand for row $i$ and column $i$. We first contract the set around the center point. Then, doing the following set of permutations we get the third transformation: $c_3 \leftrightarrow c_1, l_3 \leftrightarrow l_1, c_4 \leftrightarrow c_3, l_4 \leftrightarrow l_3$. Then translate every point with the vector $(-1,-1)$. This transforms a set of points $X$ (left-hand side) into a lower set $L$ (right-hand side). $V(X,L)$ is invertible.
\end{example}

\begin{tikzpicture}[scale=0.4]
\draw[help lines, color=gray!30, dashed] (0,0) grid (5.9,5.9);
\draw[->,ultra thick] (0,0)--(6,0) node[right]{$x$};
\draw[->,ultra thick] (0,0)--(0,6) node[above]{$y$};
\foreach \Point in {(0,3), (1.5,3), (3,3), (4.5,3), (6,3), (1.5,1.5), (3,1.5), (4.5,1.5) ,(1.5,4.5), (3,4.5), (4.5,4.5), (3,0), (3,6)}{
    \node [orange] at \Point {\textbullet};
}
\node at (7.5,3) {$\longrightarrow$};
\end{tikzpicture} 
\begin{tikzpicture}[scale=0.4]
\draw[help lines, color=gray!30, dashed] (0,0) grid (5.9,5.9);
\draw[->,ultra thick] (0,0)--(6,0) node[right]{$x$};
\draw[->,ultra thick] (0,0)--(0,6) node[above]{$y$};
\foreach \Point in {(1,3), (2,3), (3,3), (4,3), (5,3), (2,2), (3,2), (4,2) ,(2,4), (3,4), (4,4), (3,1), (3,5)}{
    \node [orange] at \Point {\textbullet};
}
\node at (7,3) {$\longrightarrow$};
\end{tikzpicture} 
\begin{tikzpicture}[scale=0.4]
\draw[help lines, color=gray!30, dashed] (0,0) grid (5.9,5.9);
\draw[->,ultra thick] (0,0)--(6,0) node[right]{$x$};
\draw[->,ultra thick] (0,0)--(0,6) node[above]{$y$};
\foreach \Point in {(1,1), (1,2), (1,3), (1,4), (1,5), (2,1), (2,2), (2,3) ,(3,1), (3,2), (3,3), (4,1), (5,1)}{
    \node [orange] at \Point {\textbullet};
}
\node at (7,3) {$\longrightarrow$};
\end{tikzpicture}
\begin{tikzpicture}[scale=0.4]
\draw[help lines, color=gray!30, dashed] (0,0) grid (5.9,5.9);
\draw[->,ultra thick] (0,0)--(6,0) node[right]{$x$};
\draw[->,ultra thick] (0,0)--(0,6) node[above]{$y$};
\foreach \Point in {(0,0), (0,1), (0,2), (0,3), (0,4), (1,0), (1,1), (1,2) ,(2,0), (2,1), (2,2), (3,0), (4,0)}{
    \node [orange] at \Point {\textbullet};
}
\end{tikzpicture}

We can use this intuition to find arrangements of sensors that grant the Vandermonde matrix is invertible. In particular, let us describe a map that encompasses both of these transformations, for which we will establish an equivalence relation, allowing us to prove the invertibility.

\begin{definition}[Labeling Function]\label{def:labelingfunction}
Let $X$ define a set of points in a $m$ dimensional module over some commutative ring $\mathbb{K}$. Define a labeling function as a map:
\begin{equation}
\begin{aligned}
	a : \mathbb{Z}^m &\longrightarrow X \subseteq \mathbb{K}^m \\
	z = (z_1,\cdots,z_m) &\longmapsto a_z \equiv a(z) = x \in X
\end{aligned}
\end{equation}
Holding the following properties:
\begin{enumerate}
	\item $a(z_1,\cdots,z_m) = (a_1(z_1),a_2(z_2),\cdots,a_m(z_m)) $ where all $a_j$ are independent of each other
	\item $a^{-1} (x)$ is uniquely defined, i.e. corresponds to a bijection, for every $x$
	\item Let $a,b$ be labeling functions. Then $a \circ b$  and $b \circ a$ are labeling functions
\end{enumerate}
Note property 3 is a direct consequence of 1 and 2.
\end{definition}

Let us establish a relation between a set of points in a arbitrary space $\mathbb{K}^m$ and a set of points over the integer module $\mathbb{Z}^m$. Let $X \subseteq \mathbb{K}^m, Z \subseteq \mathbb{Z}^m $, we say $X \sim Z$ if one can find a labeling function $a$ such that $a(Z)=X$. Under this relation we can find a more general theorem for the invertibility of our Vandermonde matrix guaranteeing our system will have a solution:

\begin{theorem}
Let $X \subseteq \mathbb{R}^m$, with $|X|=p$, such that $X \sim Y$ for some lower set $Y$ of $(\mathbb{N}_0^n,\preceq)$. Then the Vandermonde matrix $V(X,Y) \in \mathbb{R}^{p \times p}$ is invertible.
\end{theorem}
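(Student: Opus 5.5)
The plan is to avoid the ideal-theoretic Condition~\ref{condition:invertibility1} and argue directly with a tensor-product Newton basis adapted to the labeling function. This factors $V(X,Y)$ into a product of two triangular matrices. Write $X=a(Y)$ with $a(z)=(a_1(z_1),\dots,a_m(z_m))$. Since $a$ is a bijection of the product form in Definition~\ref{def:labelingfunction}, each one-dimensional map $a_j$ is injective on the integers it is evaluated on. This is the only property of $a$ I will use. I index the rows of $V(X,Y)$ by $z\in Y$, via $\boldsymbol{x}=a(z)$, and the columns by $\boldsymbol{\alpha}\in Y$.

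\textbf{Step 1 (change of basis).} For each coordinate $j$ and each $k\ge 0$, define the one-variable Newton polynomials
\begin{equation*}
N^{(j)}_k(t)=\prod_{i=0}^{k-1}\bigl(t-a_j(i)\bigr).
\end{equation*}
For $\boldsymbol{\alpha}\in Y$, set $N_{\boldsymbol{\alpha}}(\boldsymbol{x})=\prod_{j=1}^m N^{(j)}_{\alpha_j}(x_j)$. Each $N^{(j)}_k$ is monic of degree $k$. Hence $N_{\boldsymbol{\alpha}}=\boldsymbol{x}^{\boldsymbol{\alpha}}+\sum_{\boldsymbol{\beta}\prec\boldsymbol{\alpha}} t_{\boldsymbol{\beta}\boldsymbol{\alpha}}\boldsymbol{x}^{\boldsymbol{\beta}}$, where the sum runs over $\boldsymbol{\beta}$ strictly below $\boldsymbol{\alpha}$ in the product order. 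Because $Y$ is a lower set (Definition~\ref{def:lowerset}), every such $\boldsymbol{\beta}$ lies in $Y$. Therefore the span of $\{N_{\boldsymbol{\alpha}}\}_{\boldsymbol{\alpha}\in Y}$ equals the span of $\{\boldsymbol{x}^{\boldsymbol{\alpha}}\}_{\boldsymbol{\alpha}\in Y}$. In matrix form, $W=V(X,Y)\,T$, where $W_{z,\boldsymbol{\alpha}}=N_{\boldsymbol{\alpha}}(a(z))$ and $T$ is unit upper triangular. Here I fix a linear extension of $\preceq$ on $Y$ to order rows and columns. This is where the lower-set hypothesis is essential: without it, the Newton basis would leave the monomial span indexed by $Y$.

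\textbf{Step 2 (triangularity of $W$).} Compute
\begin{equation*}
W_{z,\boldsymbol{\alpha}}=\prod_{j=1}^m\prod_{i=0}^{\alpha_j-1}\bigl(a_j(z_j)-a_j(i)\bigr).
\end{equation*}
Since $z\in Y\subseteq\mathbb{N}_0^m$, suppose $\boldsymbol{\alpha}\not\preceq z$. Then some $j$ has $0\le z_j<\alpha_j$, so the factor with $i=z_j$ vanishes and $W_{z,\boldsymbol{\alpha}}=0$. On the diagonal $z=\boldsymbol{\alpha}$, every factor is $a_j(\alpha_j)-a_j(i)$ with $i<\alpha_j$. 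Each such factor is nonzero by injectivity of $a_j$. Thus, in the chosen linear extension, $W$ is triangular with nonzero diagonal, so $\det W\neq 0$.

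\textbf{Step 3 (conclusion).} From $\det W=\det V(X,Y)\det T=\det V(X,Y)$, it follows that $V(X,Y)$ is invertible. The main point needing care is bookkeeping rather than a real obstacle. One must check that the row labels $z$ and column labels $\boldsymbol{\alpha}$ live in the same lower set. One must also check that only injectivity of each $a_j$ on $\{0,\dots,\max_{z\in Y} z_j\}$ is needed, which follows from property 2 of Definition~\ref{def:labelingfunction} together with the product structure of property 1. As a sanity check, when $a$ is the identity this recovers Proposition~\ref{prop:invertibilityX}. The same argument also explains the invariance under translations, contractions and row/column permutations observed in Example~\ref{example:paretoset}.
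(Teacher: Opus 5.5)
Your proof is correct, and it takes a genuinely different route from the paper's.

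\textbf{The paper's route.} It works with the vanishing ideal. Following Example~\ref{example:borderideals}, it describes $I(X)$ by generators $\prod_j\prod_{i<\gamma_j}(x_j-a_j(i))$ whose exponents $\boldsymbol{\gamma}$ lie in $\mathsf{Cov}(Y)$. It then asserts that no nonzero polynomial supported on the $Y$-monomials vanishes on $X$, and invokes Theorem~\ref{thm:invertibility1} to turn this into full rank.

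\textbf{The shared core.} Both arguments hinge on the same vanishing fact: such a product is zero at $a(z)$ whenever $\boldsymbol{\gamma}\not\preceq z$. You apply it to the Newton polynomials indexed by $Y$ itself, which triangularizes the collocation matrix. The paper applies it to products indexed by the cover, to obtain ideal generators.

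\textbf{What each approach buys.} The paper's framing matches its later ``no function in the span vanishes on $X$'' criterion for alternant and design matrices. However, its key step is only asserted, and it is illustrated only in two dimensions. Making it rigorous requires a standard-monomial count. Let $J$ be the ideal of these generators. Every monomial outside $Y$ is divisible by some leading term $\boldsymbol{x}^{\boldsymbol{\gamma}}$, which gives $p=\dim\mathbb{K}[\boldsymbol{x}]/I(X)\le\dim\mathbb{K}[\boldsymbol{x}]/J\le|Y|=p$. This forces $J=I(X)$, with standard monomials exactly $Y$, so the $Y$-monomials are independent modulo $I(X)$. Your factorization $W=V(X,Y)T$ avoids all of this. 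It is elementary and holds in any dimension and over any field. It also makes explicit that only the product structure and the coordinatewise injectivity of the labeling are used. As a bonus, it yields a Newton form from which $\boldsymbol{c}$ can be computed by triangular substitution.

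\textbf{A bookkeeping remark.} Coordinatewise injectivity needs the lower-set property in addition to properties 1 and 2 of the labeling function. The points $\boldsymbol{\alpha}$ and $\boldsymbol{\alpha}-(\alpha_j-i)\boldsymbol{e}_j$ differ only in coordinate $j$. Because $Y$ is a lower set, both belong to $Y$, and injectivity of $a$ on $Y$ then gives $a_j(\alpha_j)\neq a_j(i)$.
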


\begin{proof}
The proof follows from Example~\ref{example:borderideals}. Given that $X \sim Y$, then there exists a bijection $a$ mapping $Y$ to $X$. This means that one is able to build a set of ideals as in example~\ref{example:borderideals}, substituting $x - a_k$, $y - b_k$, ... by corresponding $x_j - a_j(y)$ for some $ y \in Y$ , as $Y$ is a lower set and therefore has a Pareto front with size at least 1. From here we recover invertibility as the degrees of the ideals are given by the cover of $Y$, which means we are able to choose a set of monomials $\prod_{j=1}^n x_j^{\alpha_j}$ for all $\alpha \in Y$ such that the points in $X$ never belong to a variety spanned by these monomials. From Theorem~\ref{thm:invertibility1}, we then get invertibility of the Vandermonde matrix $V(X,Y)$.
\end{proof}

Note however that the relabeling of a set is not unique, and this has consequences. Namely, one can relabel a set $X$ into two different lower sets $L_1 \neq L_2$ that constructs two different Vandermonde matrices. For example:

\begin{example}
Let $X \subset \mathbb{N}_0^2$, be given by $X = \{ (j,j) | j=1,2,\cdots, p \}$. There are two possible relabelings, that give access to all derivatives up to $p-1$ in either one or the other direction, but never any cross derivatives:
\begin{center}
\begin{tikzpicture}[scale=0.4]
\draw[help lines, color=gray!30, dashed] (0,0) grid (5.9,5.9);
\draw[->,ultra thick] (0,0)--(6,0) node[right]{$x$};
\draw[->,ultra thick] (0,0)--(0,6) node[above]{$y$};
\foreach \x in {1,2,3,5}{
	\node[orange] at (\x,\x) [circle,fill,inner sep=1.5pt]{};
}
\draw[-,thick,orange] (4,4)--(4,4) node[]{$\rddots$};
\draw[-,thick] (7,3)--(7,3) node[above]{$ \sim $};
\end{tikzpicture} 
\begin{tikzpicture}[scale=0.4]
\draw[help lines, color=gray!30, dashed] (0,0) grid (5.9,5.9);
\draw[->,ultra thick] (0,0)--(6,0) node[right]{$x$};
\draw[->,ultra thick] (0,0)--(0,6) node[above]{$y$};
\foreach \x in {1,2,3,5}{
	\node[orange] at (\x,0) [circle,fill,inner sep=1.5pt]{};
}
\draw[-,thick,orange] (4,0)--(4,0) node[]{$\cdots$};
\draw[-,thick] (7,3)--(7,3) node[above]{$ \sim $};
\end{tikzpicture} 
\begin{tikzpicture}[scale=0.4]
\draw[help lines, color=gray!30, dashed] (0,0) grid (5.9,5.9);
\draw[->,ultra thick] (0,0)--(6,0) node[right]{$x$};
\draw[->,ultra thick] (0,0)--(0,6) node[above]{$y$};
\foreach \x in {1,2,3,5}{
	\node[orange] at (0,\x) [circle,fill,inner sep=1.5pt]{};
}
\draw[-,thick,orange] (0,4)--(0,4) node[]{$\vdots$};
\end{tikzpicture} 
\end{center}
\end{example}

To conclude this section, we have found how the choice of set $X$ (position of the sensors) impacts the invertibility of the Vandermonde matrix, given an expansion using a lower set $L$. This means how we can find solutions for estimators of the expansion coefficients, which contain no error from the non-invertibility of the Vandermonde matrix. We found a sufficient condition: when the expansion orders forming a lower set $L$ and the set of points where the sensors are located $X$ are equivalent under relabeling. It is not a necessary condition, as one could think of relaxing some assumptions on the relabeling function, that would still imply invertibility of the Vandermonde matrix. We leave this for future work.

\subsection{Interpolation Problem and Linear Combinations}

So far we have found the way to transform the problem of evaluating a derivative of a multivariate function in a subset of points, namely those of the location of the sensors. This was equivalent to what is known in numerical differentiation of using finite differences methods, and consists of finding the solution to a specific linear system of equations (Eqs.~\ref{eq:vandermonde1D} and \ref{eq:vandermonde}). 

To find an interpolation of the function and evaluate it outside of the set of points, $i.e.$ finding the interpolation value of $F(\boldsymbol{x}_t)$, given by the sensor locations can be made two-fold: either via finding the Taylor expansion around a sensor point $\boldsymbol{x}_{\min}$ close to the target $\boldsymbol{x}_t$, or by finding the Taylor expansion around the target point $\boldsymbol{x}_t$. From the previous methods, it is more intuitive to understand why the first approach would work, because the result of Eqs.~\ref{eq:expansions} and Eqs.~\ref{eq:expansionsMulti} are the derivatives at a point where we have one sensor, which correspond to the coefficients of the Taylor expansion around that same point. Let us start by describing the problem and a method to find the solutions in this case:

\begin{problem}\label{problem:interpolation}
Let $X \subset D \subseteq \mathbb{R}^m$ be a set of points or sensor locations in $m$ dimensions. Let $\boldsymbol{x}_t \not\in X$ be a point in the domain of a field $F(\boldsymbol{x})$, $\boldsymbol{x}_t \in D$. Find the interpolation value of $F(\boldsymbol{x}_t)$ given access to $\boldsymbol{F}= \{ F(\boldsymbol{x})\}_{\boldsymbol{x} \in X}$.
\end{problem}

Problem~\ref{problem:interpolation} can be solved in the following way:
\begin{enumerate}
	\item Find the closest point $\boldsymbol{x}_{\min} = \min_{\boldsymbol{x} \in X}\norm{\boldsymbol{x}_t - \boldsymbol{x}}_\infty$ and construct $\delta X = \{\boldsymbol{x} - \boldsymbol{x}_{\min} \}_{\boldsymbol{x} \in X}$;
	\item Find a lower set $L$ in $(\mathbb{N}_0^m)$ that is equivalent under relabeling to $X$, $L \sim X$
	\item Expand the field around $\boldsymbol{x}_{\min} $:
	\begin{equation}
		F(\boldsymbol{x}_t ) \approx  \sum_{\boldsymbol{\alpha} \in L} \frac{ D^{\boldsymbol{\alpha}} F(\boldsymbol{x}_{\min}) }{\boldsymbol{\alpha} !}(\boldsymbol{x}_t-\boldsymbol{x}_{\min})^{\boldsymbol{\alpha}} 
	\end{equation}
	\item Using the field values for the points in $X$, use Eq.~\ref{eq:vandermonde} to find an $L$ approximation of $ D^{\boldsymbol{\alpha}} F(\boldsymbol{x}_{\min})$:
	\begin{equation}
	\begin{aligned}
		F(\boldsymbol{x}_t ) &\approx  \sum_{\boldsymbol{\alpha} \in L} \frac{ \left( V(\delta X,L)^T \right)^{-1} \boldsymbol{\alpha}! \boldsymbol{e}_{\boldsymbol{\alpha}} \cdot \boldsymbol{F} }{\boldsymbol{\alpha} !}  (\boldsymbol{x}_t-\boldsymbol{x}_{\min})^{\boldsymbol{\alpha}}  \\
		&=  \left( V(\delta X,L)^T \right)^{-1} \left(  \sum_{\boldsymbol{\alpha} \in L} \boldsymbol{e}_{\boldsymbol{\alpha}} (\boldsymbol{x}_t-\boldsymbol{x}_{\min})^{\boldsymbol{\alpha}} \right) \cdot \boldsymbol{F} \\
		&=\left( \left( V(\delta X,L)^T \right)^{-1} \boldsymbol{r} \right) \cdot \boldsymbol{F} = \boldsymbol{c} \cdot \boldsymbol{F}
	\end{aligned}
	\end{equation}
	where we used $\boldsymbol{r}$ to be a vector with the same size as $L$, such that the component $\boldsymbol{\alpha}$,  $(\boldsymbol{r})_{\boldsymbol{\alpha}} = (\boldsymbol{x}_t-\boldsymbol{x}_{\min})^{\boldsymbol{\alpha}}$. Notice all the sizes match, $X$ is a set with $|X|=p$ elements, which means $L$ and $\boldsymbol{F}$ have $p$ elements, which in turn mean $V \in \mathbb{R}^{p\times p}$ and $\boldsymbol{r} \in \mathbb{R}^p$.
\end{enumerate}

Next let us generalize this for any linear combination of the interpolation of a derivative. This is simply applying the same as before, but to the expansion presented in Eq.~\ref{eq:expansionderivatives}:

\begin{problem}\label{problem:interpolationderivatives}
Let $X \subset D \subseteq \mathbb{R}^m$ be a set of points or sensor locations in $m$ dimensions. Let $\boldsymbol{x}_t \not\in X$ be a point in the domain of a field $F(\boldsymbol{x})$, $\boldsymbol{x}_t \in D$. Find the interpolation value of a derivative $D^{\boldsymbol{\zeta}}F(\boldsymbol{x}_t)$ given access to $\boldsymbol{F}= \{ F(\boldsymbol{x})\}_{\boldsymbol{x} \in X}$.
\end{problem}

Problem~\ref{problem:interpolationderivatives} can be solved very similarly by redefining vector $\boldsymbol{r}$ in the following way:

\begin{equation}
	( \boldsymbol{r} )_\alpha = \begin{cases}
	(\boldsymbol{x}_t-\boldsymbol{x}_{\min})^{\boldsymbol{\alpha}-\boldsymbol{\zeta}} \frac{ \boldsymbol{\alpha}!}{(\boldsymbol{\alpha} - \boldsymbol{\zeta})!} , \qquad & \zeta \preceq \alpha \\
	0 , \qquad & \text{ otherwise }
	\end{cases}
\end{equation}

In both cases we are able to transform the problem in a linear combination of the values of the field at a set of points. Naturally, a linear combination of linear combinations is still a linear combination. This means that any function that can be written as a linear combination of a set of interpolation points, or interpolation of the derivatives, is also accessible as a linear combination of the problems above.

As we mentioned in the beginning of this section, there exists an alternative way of doing this: expanding directly over the target point. Let us detail how to solve Problem~\ref{problem:interpolationderivatives} in this case (note that the case of $\boldsymbol{\zeta} = \vec{0}$ corresponds to the interpolation value, and is in fact Problem ~\ref{problem:interpolation}):
\begin{enumerate}
	\item Construct $\delta X = \{\boldsymbol{x} - \boldsymbol{x}_t \}_{\boldsymbol{x} \in X}$;
	\item Find a lower set $L$ in $(\mathbb{N}_0^m)$ that is equivalent under relabeling to $X$, $L \sim X$
	\item Using the field values for the points in $X$, use Eq.~\ref{eq:vandermonde} directly:
	\begin{equation}
	\begin{aligned}
		D^{\boldsymbol{\zeta}}F(\boldsymbol{x}_t ) &\approx \left[  \left( V(\delta X,L)^T \right)^{-1} \boldsymbol{\zeta}! \boldsymbol{e}_{\boldsymbol{\zeta}} \right] \cdot \boldsymbol{F}
	\end{aligned}
	\end{equation}
\end{enumerate}

A few examples of approximations that one can by employing these techniques are the laplacian of a function on a point, the determinant of the Hessian matrix at a given point, the average of the function at a given set of points and so on.

\subsubsection{Error of Derivatives/Interpolation}

In all of the above calculations and theorems, from the analyticity of the field, we assumed that the Taylor expansion converged. However, if one only takes a finite amount of points, then one can only find an approximation that is as good as the amount of points taken. For this reason, let us give a bound for the error is the interpolation or target function. As we seen above, in all problems we are able to write the target function $D^{\boldsymbol{z}} F(\boldsymbol{y})$ as $\boldsymbol{c}(\boldsymbol{z},\boldsymbol{y})\cdot \boldsymbol{F}$. Plugging this $\boldsymbol{c}(\boldsymbol{z},\boldsymbol{y})$ back into the original expansion around the set of points $X$:

\begin{equation}
\begin{aligned}
	D^{\boldsymbol{z}} F(\boldsymbol{y}) = \boldsymbol{c}(\boldsymbol{z},\boldsymbol{y})\cdot \boldsymbol{F} + \boldsymbol{c}(\boldsymbol{z},\boldsymbol{y})\cdot \boldsymbol{\varepsilon}(\boldsymbol{z},\boldsymbol{y})
\end{aligned}
\label{eq:error}
\end{equation}
Where $\boldsymbol{\varepsilon}(\boldsymbol{z},\boldsymbol{y})$ is the vector where each component correspondent to each point is given by $(\boldsymbol{\varepsilon}(\boldsymbol{z},\boldsymbol{y}))_j = \epsilon (\boldsymbol{z},\boldsymbol{x}_j, \boldsymbol{y})$ of Eq.~\ref{eq:expansions}. This corresponds to the individual expansion error of the field around the point $\boldsymbol{x}_j \in X$. Note that the vector $\boldsymbol{c}$ one would obtain from inverting the Vandermonde matrix should intuitively counteract when the corresponding $\boldsymbol{x}_j $ is distant from the expansion point. On the other side, if one tries to interpolate outside of the neighborhood of $X$, $i.e.$ a point $\boldsymbol{y}$ such that $\norm{\boldsymbol{y}-\boldsymbol{x}} \gg 1, \forall \ x\in X$, then the interpolation has a larger error, as expected. 

Additionally, since this is equivalent to an interpolation, if the field is given by a polynomial function whose monomial degrees only take values over a subset which is equivalent under relabeling to $X$, then the interpolation is exact and there is no error of approximation. Taking $L\sim X$ as the set for the degrees of the monomials, this can be seen as the derivatives with degree on $\mathsf{Cov}(L)$ being zero and the expansion being exact.

\subsection{Example: Non-Local \textit{vs.} Local Strategy on a 2D Grid Interpolation}\label{section:interpolationexample}

We now explore the possible advantage in an entangled, or what we call 'non-local' strategy and a local strategy that does not make use of entanglement.

As an example for the interpolation let us consider the following 2D grid interpolation problem:

\begin{problem}
Suppose we have an arrangement of sensors of $n\times n$ sensors, arranged as a grid. Moreover, assume that our field is given by a polynomial function:
\[
	F(\boldsymbol{x}) = (x -1)^k + (y-1)^k
\]
Given the sensor positions $X = \{\boldsymbol{x}_1, \boldsymbol{x}_2, \dots , \boldsymbol{x}_{n^2}\}$, and local evaluation of the field $\boldsymbol{F} = \{F(\boldsymbol{x})\}_{\boldsymbol{x} \in X}$, find the value of $F(\boldsymbol{x}_t)$ for a given target location $\boldsymbol{x}_t$.
\end{problem}

Using the methodology introduced in this section, we verify that the Taylor expansion available will have errors for $\mathcal{O}(\boldsymbol{x}^{\boldsymbol{\alpha}})$ where $\norm{\boldsymbol{\alpha}}_\infty \geq n$ (see Example \ref{ex:lowerset}, leftmost lower set). This means that the model would be error free if $n>k$. To find the value of $F(\boldsymbol{x}_t)$ one would first build the Vandermonde matrix at each point $\boldsymbol{x}_t$ and then build the corresponding vector such that:

\begin{equation}
	F(\boldsymbol{x}_t) = \left[(V(\delta X_t)^T)^{-1} \boldsymbol{e}_0 \right]\cdot \boldsymbol{F} = \boldsymbol{c} \cdot \boldsymbol{F}, \ \ \delta X_t = \{\boldsymbol{x}-\boldsymbol{x}_t \}_{\boldsymbol{x} \in X},  \boldsymbol{c} =  \boldsymbol{c}(\boldsymbol{x}_t)
\end{equation}

If one uses $3\times 3$ sensors, with $k=3$ one would have errors amounting to $\mathcal{O}(\boldsymbol{x}^{\boldsymbol{\alpha}})$ where $\norm{\boldsymbol{\alpha}}_\infty = 3$, but by choosing an array of $5 \times 5$ sensors, than the estimation would be error free, as the polynomial has maximum degree $\norm{\boldsymbol{\alpha}}_\infty = 5$. 

\begin{figure}[t]
\centering
\subfloat{\includegraphics[width=0.75\linewidth]{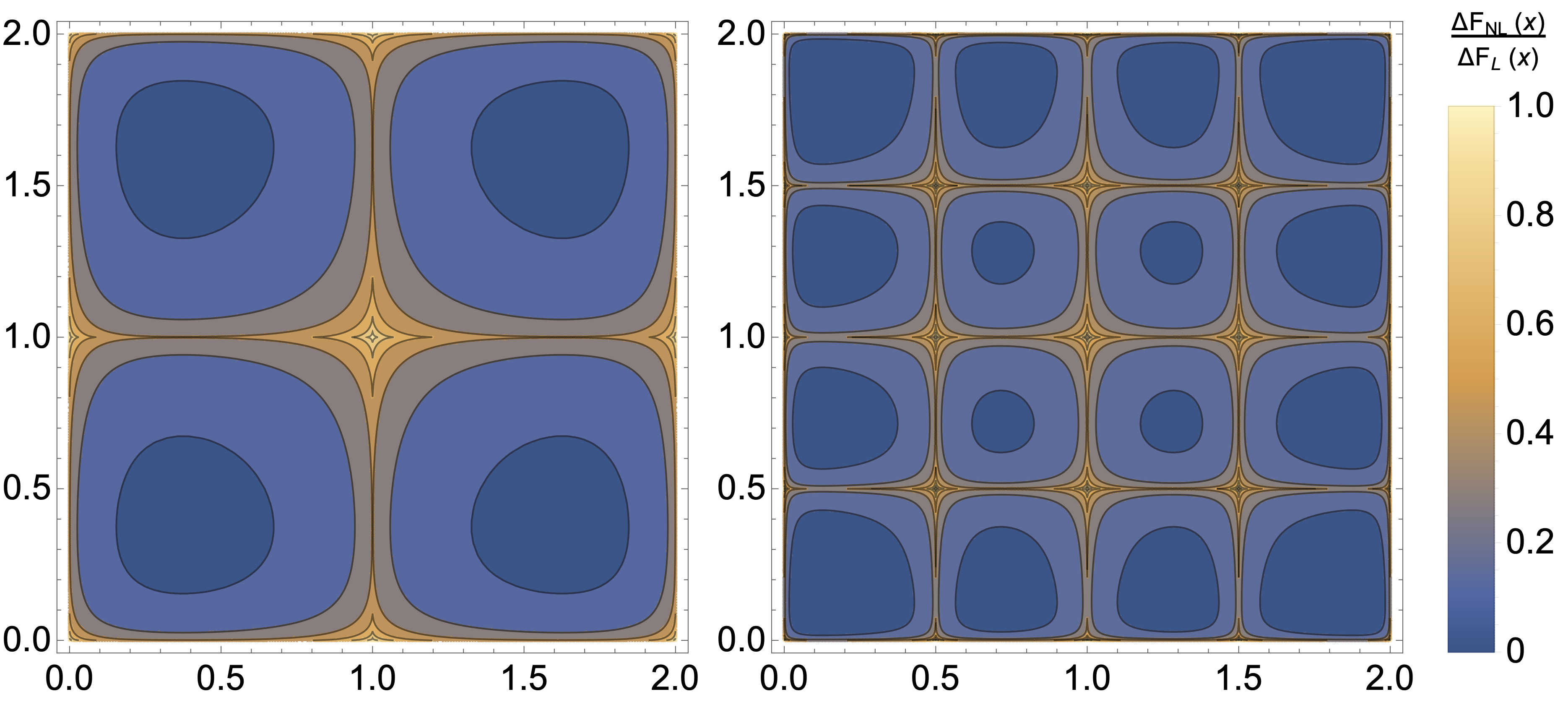}} 
\caption{Comparison between the best non-local and local strategies for the solution of a interpolation problem with $k=3$, using a equally spaced grid of sensors in two dimensions made of $3\times 3$ sensors (left) and $5\times 5$ sensors (right). To each point $\boldsymbol{x}_t = (x,y)$ we calculate and plot the gain on the precision (measured by the variance of the non-local divided by the variance of the local) of employing a non-local strategy to estimate $\hat{F}(\boldsymbol{x}_t)$. }
\label{fig:variancecomparison}
\end{figure}

Linking with a quantum estimation, we know that a private state, and in particular a GHZ state whose resource distribution we are able to control, is the state that allows to retrieve the most information \cite{Bugalho2025}. In particular, comparing among all non-local strategies fixing only the amount of total resources, we get the GHZ state distributed according to the vector $\boldsymbol{c}$ is the optimal non-local solution. 

One can also figure out which is the best local strategy, where by local we mean that only qubits inside of the same sensor can be entangled. This means there cannot exist any entanglement between the various sensors. For this, one can also use GHZ states locally, which are known to achieve the Heisenberg scaling, and get the most information about a target parameter, which in this case will be the local values of the field $F(\boldsymbol{x}_j)$. Then we can use these local estimators to build $\boldsymbol{c} \cdot \boldsymbol{F}$. While in the non-local case, the optimal distribution of resources is given directly by the vector $\boldsymbol{c}$, for the local strategy this will not be the case. In Appendix \ref{appendix:optimizing} we prove that the resource distribution, represented by a vector of resources $\boldsymbol{n}$, that minimizes the variance of the estimator of $F(\boldsymbol{x}_t) = \boldsymbol{c}\cdot \boldsymbol{F}$, $\Delta \hat{F}(\boldsymbol{x}_t)^2$, for fixed amount of total resources, $N$, is given by:
\begin{equation}
\begin{aligned}
	\boldsymbol{n}_{\NL} &= \frac{N}{\norm{\boldsymbol{c}}_1} \boldsymbol{c} \\
	\Delta \hat{F}(\boldsymbol{x}_t)^2_{\NL} &= \frac{ \norm{\boldsymbol{c}}_1^2 }{m N^2}
\end{aligned}
 \qquad , \qquad
\begin{aligned}
	\boldsymbol{n}_{\LL} &= \frac{N}{\norm{\boldsymbol{c}}_{2/3}^{3/2}} \boldsymbol{c}^{2/3} \\
	\Delta \hat{F}(\boldsymbol{x}_t)^2_{\LL} &= \frac{ \norm{\boldsymbol{c}}_{2/3}^2}{m N^2}
\end{aligned},
\label{eq:optimal}
\end{equation}
where we used that $\boldsymbol{c}^d$ is the entry-wise power of a vector, $i.e.$ where each $(\boldsymbol{c}^d)_j = (\boldsymbol{c})_j^d$. Note that from the way the estimators are constructed, the optimal non-local estimation has access to information only about the target function, and therefore is private. In the local case, each sensor would first find the best estimator for its value of the field, and then they would share it with the rest of the parties, so they could build the estimator for $F(\boldsymbol{x}_t)  = \boldsymbol{c}\cdot\boldsymbol{F}$.

To exemplify in one scenario addressed in this paper, we plotted in Fig.~\ref{fig:variancecomparison} the optimal non-local vs. the optimal local estimator of the interpolation value of $\hat{F}(\boldsymbol{x}_t)$ for $\boldsymbol{x}_t$ inside the domain of a 2-dimensional regular grid with $3\times 3 = 9$ sensors and with $5\times 5=25$ sensors. We observe that the entangled strategy is always better  than the non-entangled strategy, except for estimating $F(\boldsymbol{x}_j)$ where $\boldsymbol{x}_j$ belongs to one of the points where we have a sensor. This brings back the intuition that if we want to measure a quantity that is local, then the best strategy is local. This can also be seen as a consequence of $\norm{\vec{c}}_1 \leq \norm{\vec{c}}_{2/3} \leq \sqrt{d} \norm{\vec{c}}_1$, where $d$ is the length of the vector, therefore $n^2$ in this case. The left equality happens at $\vec{c} = \alpha \vec{e}_j$ (local) and the right equality happens if $\vec{c} = \alpha \vec{1}$ (average). Note that from Eq.~\ref{eq:optimal}, this only depends on the positions and number of sensors, not the actual function itself. Moreover, since this is a polynomial interpolation, if $k<n$, then the interpolation is exact. In Appendix~\ref{appendix:interpolation} we provide plots of the interpolation error for different polynomials, and discuss more on the accuracy of the interpolation.

\section{Analytical Functions}\label{section:signal}

We started by looking at Taylor expansions of functions, and using the Taylor expansions and polynomials to interpolate the field at arbitrary points. This was equivalent as limiting our subset of functions $\mathcal{F}$ in Eq.~\ref{eq:fielddefinition} to the space of polynomial functions, and in particular to monomials of increasing degree. This came as a consequence of having no \textit{a-priori} information about the field in question, which implied that there should be no privileged set of functions that could describe our field. This in turn gave us information about the numerical derivatives of the field, and a method to build an interpolation of the field at any point, together with the derivatives of the interpolation. 

As one gains more information about the field, $e.g.$ the constituent signals that describe the field, then one can think how to do better than before. This is the case where the functions in the model of our field can be chosen from available information, and in particular one can use a set of functions that is the same size as the set of points, or data, available. 

Suppose we have again a set of points with sensors $X$, such that at each point $x \in X$ we have access to $F(x)$. In this case the functions in the model we want to test, that can be thought exactly as signals following the language of \cite{Sekatski2019,Hamann2022,Hamann2023}. The problem can then be formulated into a signal-isolation of a family of functions, or signals, $\mathcal{F} = \{f_1, f_2, \cdots , f_k\}$, which are continuously differentiable on some domain $D, X\subseteq D$, describing $F$. These signals will create the field $F(x)$ by a linear combination of these functions: 

\begin{equation}
	F(x) = \sum_{j=1}^k \beta_j f_j (x).
	\label{eq:field}
\end{equation}

Given a set of $k$ points $X$, one can define the alternant matrix $A=A(X,\mathcal{F})$ as the matrix described by the following equation:

\begin{equation}
	(A)_{ij} = f_i (x_j), \qquad f_i \in \mathcal{F}, x_j \in X.
\end{equation}

One can see that by taking the monomials as the set for $\mathcal{F}$, one can create the Vandermonde matrix. In this case our goal is to find the vector $\beta$, which corresponds to the different weights of each of the signals that best describes the observed data. This can be thought as isolating the coefficients for each of the signals present. The solution should be given by solving the following system:

\begin{equation}
	\boldsymbol{\beta} = A^{-1} \boldsymbol{F}, \qquad \text{where } (\boldsymbol{F} )_j = \hat{F}(x_j).
\label{eq:solAlternant}
\end{equation}

Once again, we find a condition for the invertibility of the matrix $A$ in terms of the sensors positions and the family of functions, in the same way we found for the Vandermonde matrix. However, this time we do not provide a methodology to build a set of points, or positions for the sensors, that result in an invertible matrix. The main reason behind this is that the problem is highly dependent on the functions in the model. We discuss further about the Alternant matrix and its invertibility in Appendix \ref{appendix:alternant}. Nonetheless, and in the same way as before, an invertible alternant matrix results from a set of positions that are able to discern unambiguously the model in question.

\subsection{Signal-Isolation Problem}

As before the interpolation problem had a direct output of identifying the derivatives of the field at some point, the signal-isolation problem also has a direct outcome -- calculating the relative strength of each of the signal sources, of model functions. This are the coefficients $\beta_j$ that appear in Eq.~\ref{eq:field}. In fact, if one substitutes the model functions by the monomials, then the relative strengths of these "signals" are exactly the derivatives at the expansion point. Let us then state this problem, which is the same problem analyzed in \cite{Sekatski2019,Hamann2022,Hamann2023,Hamann2024}:

\begin{problem}\label{problem:signalisolation}
Let $X \subset D \subseteq \mathbb{R}^m$ be a discrete set of points or sensor locations in $m$ dimensions of size $p$. Let $\boldsymbol{\beta}  \in \mathbb{R}^p, \mathcal{F} = \{ f_1,f_2,\dots,f_p\}$ be the coefficients and the functions corresponding to the field signals (see Eq.~\ref{eq:field}). Find the estimator of the coefficient $\beta_t$ given access to $\boldsymbol{F}= \{ F(\boldsymbol{x})\}_{\boldsymbol{x} \in X}$.
\end{problem}

Problem~\ref{problem:signalisolation} can be solved in the following way:
\begin{enumerate}
	\item Construct the alternant matrix $A = A (X,\mathcal{F})$ such that $(A)_{ij} = f_j(x_i)$
	\item Using the inverse of the alternant matrix:
	\begin{equation}
	\begin{aligned}
		\beta_t &= \left[ (\boldsymbol{A}^T)^{-1} \boldsymbol{e}_t \right] \cdot \boldsymbol{F} \\
		&= \boldsymbol{c} \cdot \boldsymbol{F}
	\end{aligned}
	\end{equation}	
\end{enumerate}

From here, we can in the same way as before, generalize to find any value of the field at any position of the domain, including any derivatives. This is done by redefining a vector of weights, by noticing that the field is described by a sum of the model functions weighted by the coefficients, which one can extract with the above problem. Therefore:

\begin{problem}\label{problem:fieldeval}
Let $X \subset D \subseteq \mathbb{R}^m$ be a discrete set of points or sensor locations in $m$ dimensions of size $p$. Let $\boldsymbol{\beta}  \in \mathbb{R}^p, \mathcal{F} = \{ f_1,f_2,\dots,f_p\}$ be the coefficients and the functions corresponding to the field signals (see Eq.~\ref{eq:field}). Find the model value of a derivative $D^{\boldsymbol{\zeta}}F(x_t)$ given access to $\boldsymbol{F}= \{ F(\boldsymbol{x})\}_{\boldsymbol{x} \in X}$.
\end{problem}

Problem~\ref{problem:signalisolation} can be solved in the following way:
\begin{enumerate}
	\item Construct the alternant matrix $A = A (X,\mathcal{F})$ such that $(A)_{ij} = f_j(x_i)$
	\item Build a weight vector given by $\boldsymbol{b} = \{ D^{\boldsymbol{\zeta}}f_j (x_t), \text{for } f_j \in \mathcal{F}\}$
	\item Using the inverse of the alternant matrix:
	\begin{equation}
	\begin{aligned}
		D^{\boldsymbol{\zeta}}F(x_t) &= \sum_j \beta_j b_j \\
		&= \sum_j \left[ (\boldsymbol{A}^T)^{-1} \boldsymbol{e}_j b_j \right] \cdot \boldsymbol{F} \\
		&= \left[ (\boldsymbol{A}^T)^{-1} \boldsymbol{b} \right] \cdot \boldsymbol{F} \\
		&= \boldsymbol{c} \cdot \boldsymbol{F}
	\end{aligned}
	\end{equation}	
\end{enumerate}

As before, one can do any linear combination of the above problems, and still find a way to describe the problem as a linear combination of the field evaluated at different points. It is important to notice that in this case, the alternant matrix is a square matrix. This means we are putting as many functions in our model, as we can. This somehow relates to having some information about the functions, but not necessarily knowing to the fullest extent which are the ones that better describe our field. If one knew that our field can only contain a subset of $k < p$ functions, then more information is available, and this is exactly the subject of the next section.

\subsection{Example: Magnetic Field Source Finding}

As an example of a signal-isolation problem let us consider the following magnetic field source finding problem:

\begin{problem}
Suppose we have 4 sensors (blue dots), at the corners of a square of side $L$, $X = \{ x_1, x_2, x_3, x_4 \}$. There exists a magnetic field source passing inside this square at an unknown position, from the set of 4 possibilities $Z =  \{z_1, z_2, z_3, z_4\}$ (red dots). This corresponds to a model of the field given by:
\[
	\hat{F}(\boldsymbol{x}) = \sum_{i=1}^4 \beta_i \cdot \frac{1}{\norm{\boldsymbol{x}-\boldsymbol{z}_i}}
\]
\begin{center}
\begin{tikzpicture}
    \draw[very thin, gray,dashed] (-0.25,-0.25) grid (2.25,2.25);
    \foreach \x in {0, 2 } {
    	\foreach \y in {0,2} {
       		 \fill[blue] (\x,\y) circle (2pt);
	}
    }
     \foreach \x in {0.5, 1.5 } {
    	\foreach \y in {0.5,1.5} {
       		 \fill[red] (\x,\y) circle (2pt);
	}
    }
   
\end{tikzpicture}
\end{center}

Given the sensor positions, and local evaluation of the field $\boldsymbol{F} = \{F(x)\}_{x \in X}$, find if the origin source is at $z_2$.
\end{problem}

Using the methodology introduced in this section, the problem solution comes from finding the estimator for the value $\beta_2$. This is simply given from building the alternant matrix, and then the vector $\boldsymbol{c}$ in the following way:
\begin{equation}
	\beta_2 = \left[ (\boldsymbol{A}^T)^{-1} \boldsymbol{e}_2 \right] \cdot \boldsymbol{F} = \boldsymbol{c} \cdot \boldsymbol{F}, \ \  (A)_{ij} =  \frac{1}{\norm{\boldsymbol{x}_i-\boldsymbol{z}_j}}
\end{equation}

\section{Generalized Least-Squares}\label{section:leastsquares}

Curiously, the applications of this framework do not end here. Looking at something from the statistics field, one realizes that there is a modeling tool which is widely used for statistical regression, that is also linear in terms of the values of the local field values. Let us write the problem in question:

\begin{problem}[Generalized Least-Squares (GLS)]\label{prob:GLS}
Let $\{ y_i, x_{ij}\}$ be a set of observed data and predictor values, represented by a vector of size $p$, $\boldsymbol{y}$ and a matrix of size $p\times k$, $\boldsymbol{X}$. Let:
\begin{equation}
	\mathbf{y}=\mathbf{X} \boldsymbol{\beta}+\boldsymbol{\varepsilon}, \quad \mathbb{E}[\varepsilon \mid \mathbf{X}]=0, \quad \operatorname{Cov}[\varepsilon \mid \mathbf{X}]=\boldsymbol{\Omega}
\end{equation}
The outcome of the problem is a vector $\hat{\boldsymbol{\beta}}$ such that $\mathbb{E} [ \hat{\boldsymbol{\beta}} | \boldsymbol{X}] = \boldsymbol{\beta}$ satisfying the following equation:
\begin{equation}
\hat{\boldsymbol{\beta}}=\underset{\mathbf{b}}{\operatorname{argmin}}(\mathbf{y}-\mathbf{X} \mathbf{b})^{\mathrm{T}} \boldsymbol{\Omega}^{-1}(\mathbf{y}-\mathbf{X} \mathbf{b})
\end{equation}
\end{problem}

The vector $\boldsymbol{y}$ is given by all the $y_i$, which are the measured values, or the observed data. In our estimation scenario, these are the local values of the field $F(x_j)$, which we have assembled in the vector $\boldsymbol{F}$. The matrix $\boldsymbol{X}$ is called the design matrix, and it is an important concept in statistics. This dates all the way back to the design of experiments to test hypothesis, a field pioneered by Ronald Fisher, the name behind the Fisher information. In particular, take $(\boldsymbol{X})_{ij}$ as the $j$th feature evaluated at point $i$: the feature is a function $\phi_j$ evaluated over a point $x_i$, $\phi_j(x_i)$. Taking features from the space of analytical functions defined over some domain of points $\phi_j \rightarrow f_j$ and $x_i \in X$ and taking the size of samples $n$ to be the same as the size of features $p$ retrieves the case for the alternant matrix analyzed before. Nevertheless, the features are not limited to analytical functions, but can be used to characterize features of different populations, which was the basis of analysis of variance (ANOVA) tests introduced by Fisher.

We have outlined the generalized least-squares method. One can recover the regular linear least-squares method by making the weight correlation matrix $\boldsymbol{\Omega} \rightarrow \mathbb{1}$. If instead we allow it to be any diagonal matrix $\boldsymbol{W}$, we recover the weighted least-squares method. The important difference between this and the previously outline method for analytical functions is the functions model. In this case we go beyond not knowing the exact model, but searching over the linear span of a set of functions that represent our model, that is smaller than what our points are able to tell. For this to be valid, then one needs to know in fact what functions better represent the model for the underlying field. This is the same as having more information, so we can better refine what we are looking. At the same time, we can test among the data more than the underlying field, but also statistical properties of the data, extending therefore the notion of what is in fact a global field, to being able to discern arbitrary data features.

Importantly, the solution of Problem~\ref{prob:GLS} is given by the vector that better reproduces the observed values, meaning that it is the closest to the predictor values, by minimizing the vector $\varepsilon$. The solution for Problem \ref{prob:GLS} is given by:

\begin{equation}
\hat{\boldsymbol{\beta}}=\left(\mathbf{X}^{\mathrm{T}} \boldsymbol{\Omega}^{-1} \mathbf{X}\right)^{-1} \mathbf{X}^{\mathrm{T}} \boldsymbol{\Omega}^{-1} \mathbf{y}
\label{eq:solGLS}
\end{equation}

Since $\boldsymbol{y}$ can be the vector of the field evaluated at different positions, $\{ F(\boldsymbol{x}_j ) \}_{j=1,\cdots , n} \equiv \boldsymbol{F}$, then we are able to write the solution of each of the coefficients of a GLS problem as a linear combination of the vector $\boldsymbol{F}$ which is our goal. Note that $\left(\mathbf{X}^{\mathrm{T}} \mathbf{X}\right)^{-1} \mathbf{X}^{\mathrm{T}} \equiv \boldsymbol{X}^+ $ is what is known as the pseudo-inverse of $\boldsymbol{X}$ (in the case $\boldsymbol{\Omega} = \mathbb{1}$). Note that in this case, Eqs.~\ref{eq:solGLS} and ~\ref{eq:solAlternant} are equivalent, where the inverse was substituted by a pseudo-inverse, since the matrix is no longer necessarily a square matrix, and the vector of field values $ \boldsymbol{F}$ was substituted by the vector of observed data $ \boldsymbol{y}$. In turn, this means that $\boldsymbol{X}^+ \boldsymbol{X} = \mathbb{1}$ iff. its columns are linearly independent, which is the same type of condition we had in the alternant matrix. This is the condition for having again a model that is discernible from the collected data, or the set of field evaluated points.

\subsection{Least-Squares Problem}

Let us define exactly then this problem, which comprises both the interpolation problem (by making the model functions monomials with crescent degree) and signal-isolation problem (by choosing the set of model functions with some \textit{a-priori} information and size of the number of points, or sensors, available).

\begin{problem}\label{problem:leastsquares}
Let $X \subset D \subseteq \mathbb{R}^m$ be a discrete set of points, data or sensor positions in $m$ dimensions of size $p$. Let $\boldsymbol{b} \in \mathbb{R}^k, p \geq k$ be a vector with the same length has our vector of unknown $\boldsymbol{\beta}$ that are the coefficients of the field signals (see Eq.~\ref{eq:field}). Find the general least-squares estimator of $\boldsymbol{b}\cdot \boldsymbol{\beta}$ given access to $\boldsymbol{F}= \{ F(\boldsymbol{x})\}_{\boldsymbol{x} \in X}$.
\end{problem}

Problem~\ref{problem:leastsquares} can be solved in the following way:
\begin{enumerate}
	\item Construct the design matrix $\boldsymbol{X} = \boldsymbol{X} (X,\mathcal{F})$ such that $(\boldsymbol{X})_{ij} = f_j(x_i)$
	\item Find the generalized pseudo-inverse of $\boldsymbol{X}^+$, given a weight correlation matrix $\boldsymbol{\Omega}$
	\item Using the pseudo-inverse do:
	\begin{equation}
	\begin{aligned}
		\boldsymbol{b}\cdot \hat{\boldsymbol{\beta}} &= \boldsymbol{b}^T \boldsymbol{X}^+ \boldsymbol{F} \\
		&= \boldsymbol{c} \cdot \boldsymbol{F}
	\end{aligned}
	\end{equation}	
\end{enumerate}

In the problem above, in the case the weight matrix is the identity and the matrix is not full rank, while a generalized pseudo-inverse still exists, it will be non-unique. In particular, there will exist multiple pseudo-inverses such that $\boldsymbol{X}^+ \boldsymbol{X}$ is a diagonal matrix taking only ones and zeros in the diagonal. The number of ones will naturally depend on the rank of the matrix $\boldsymbol{X}$. Moreover, the rank should not be affected as long as the weight correlation matrix is a diagonal full-rank matrix, which is often found in the distributed setting. This is a consequence from the signals being independent of each other. This will have additional consequences into the estimator construction, in particular it will imply construction errors for the estimators of $\boldsymbol{\beta}$, as before. We discuss all of this in the next section, as well as provide a necessary and sufficient condition for the pseudo-invertibility.

\section{Invertibility, Pseudo-Invertibility and its consequences}\label{section:inverting}

In this section we condense and analyze the conditions for the invertibility and pseudo-invertibility of the alternant matrix and the design matrix (and implicitly once more for the Vandermonde matrix). We also go through the meaning of failing to be able to invert them. In the end, this is directly connected with wether the points at which we interrogate the underlying field allow the different functions of our model to be discerned or not.

We can find all the solutions in all the problems by finding the correspondent matrix and dealing with its invertibility. Let $\mathcal{F}$ and $X \subset D \subseteq \mathbb{K}^m$ be a set of $k$ functions on the domain $D$, and a discrete set of $p$ points in the same domain of the functions, respectively:

\begin{enumerate}[label=(\roman*)]
	\item \textit{Vandermonde Matrix}: $p = k$, $\mathcal{F} \in \mathbb{R}[x_1, x_2, \dots, x_m]$ (polynomial functions)
	\item \textit{Alternant Matrix}: $p = k $, $\mathcal{F} \in C^\infty(D)$ (analytical functions)
	\item \textit{Design Matrix}: $p \geq k$, $\mathcal{F} \in \mathcal{O}_D$ (holomorphic functions)
\end{enumerate}
From the, the inclusion presented in the beginning in Eq.~\ref{eq:inclusion} becomes obvious. Let us state the necessary and sufficient condition for the pseudo-invertibility of the design matrix:

\begin{theorem}
Let $\mathbb{K}$ be a field. Let $U$ be a subset of $\mathbb{K}^m$. Let $X$ be a discrete subset of $p$ distinct elements in $U$, $X\subset U$. Let $\mathcal{F} = \{ f_1, f_2,\cdots, f_k\}$ be a family of holomorphic functions in $U$, $f_j : U \subseteq \mathbb{K}^m \rightarrow \mathbb{K}$, that live in the ring $\mathcal{O}_U$. Let $\mathsf{span}_{\mathbb{K}} \ \mathcal{F}$ be the vector space generated by the functions in $\mathcal{F}$. Then:
\begin{equation}
	\mathcal{O}_U \supseteq I(X) \cap \mathsf{span}\  \mathcal{F} = \emptyset \Leftrightarrow \rank \ \boldsymbol{X}(X,\mathcal{F}) = k,
\end{equation} 
where $(\boldsymbol{X})_{ij} = f_j(x_i)$.
\end{theorem}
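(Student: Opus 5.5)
The statement should be read with $\emptyset$ replaced by $\{0\}$, since the zero function always lies in both $I(X)$ and $\mathsf{span}\,\mathcal{F}$. Here $I(X)\subseteq\mathcal{O}_U$ is the ideal of functions vanishing at every point of $X$. I will also read the elements of $\mathsf{span}\,\mathcal{F}$ as formal combinations indexed by coefficient vectors $\boldsymbol{b}\in\mathbb{K}^k$. Equivalently, I assume that $f_1,\dots,f_k$ are linearly independent in $\mathcal{O}_U$. This assumption is necessary. If $\sum_j b_j f_j\equiv 0$ with $\boldsymbol{b}\neq 0$, then the intersection is still $\{0\}$ as a set of functions, yet $\boldsymbol{X}\boldsymbol{b}=0$ and the rank drops below $k$. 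I would state this explicitly at the start of the proof. With that reading, the argument is pure linear algebra, mirroring Theorem~\ref{thm:informal} for the Vandermonde case.

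The key step is to introduce the linear map
\[
\Phi:\mathbb{K}^k\to\mathsf{span}\,\mathcal{F},\qquad \boldsymbol{b}\mapsto g_{\boldsymbol{b}}=\sum_{j=1}^k b_j f_j ,
\]
and observe that for every $i$ one has $(\boldsymbol{X}\boldsymbol{b})_i=\sum_j f_j(x_i)b_j=g_{\boldsymbol{b}}(x_i)$. Hence $\boldsymbol{X}\boldsymbol{b}=0$ if and only if $g_{\boldsymbol{b}}$ vanishes on all of $X$, that is, if and only if $g_{\boldsymbol{b}}\in I(X)$. This gives
\[
\ker\boldsymbol{X}=\Phi^{-1}\big(I(X)\cap\mathsf{span}\,\mathcal{F}\big).
\]

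Both directions then follow from rank--nullity, $\rank\boldsymbol{X}=k-\dim\ker\boldsymbol{X}$.

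For ($\Rightarrow$), suppose the intersection contains only $0$. Then any $\boldsymbol{b}\in\ker\boldsymbol{X}$ has $g_{\boldsymbol{b}}=0$, and linear independence of $\mathcal{F}$ (injectivity of $\Phi$) gives $\boldsymbol{b}=0$. So the kernel is trivial and $\rank\boldsymbol{X}=k$.

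For ($\Leftarrow$), suppose some nonzero $g$ lies in $I(X)\cap\mathsf{span}\,\mathcal{F}$. Write $g=g_{\boldsymbol{b}}$ with $\boldsymbol{b}\neq0$, which is possible since $g\neq0$. Then $\boldsymbol{X}\boldsymbol{b}=0$, so $\rank\boldsymbol{X}<k$.

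Note that $p\ge k$ is only needed so that rank $k$ is attainable. Neither holomorphy nor distinctness of the points enters beyond making $I(X)$ well defined.

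The only real obstacle is the interpretive one above: making precise what "$=\emptyset$" and "$\mathsf{span}$" mean so that the equivalence is actually true. I would handle this by stating the linear-independence hypothesis explicitly, or equivalently by defining the intersection through coefficient vectors. I would then remark that when $\rank\boldsymbol{X}<k$, the nonzero elements of $I(X)\cap\mathsf{span}\,\mathcal{F}$ are precisely the model directions that the sensors at $X$ cannot discern. This gives the link to the discussion of $\boldsymbol{X}^+\boldsymbol{X}$ and error-free subspaces that follows.
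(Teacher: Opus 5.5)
Your argument is correct and is the same one the paper gives: a nonzero coefficient vector in $\ker \boldsymbol{X}$ is exactly a combination $\sum_j b_j f_j$ that vanishes on $X$, so rank $k$ is equivalent to no such vector existing. Your caveats make explicit an assumption the paper's proof uses tacitly when it passes from ``no non-zero combination vanishes on $X$'' to linear independence of the columns. Those caveats are reading $\emptyset$ as $\{0\}$ and assuming $f_1,\dots,f_k$ linearly independent (equivalently, working with coefficient vectors), and your counterexample correctly shows the assumption cannot be dropped.
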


\begin{proof}
The proof follows very similarly to the proof of Thm.~\ref{thm:informal}. Suppose \(I(X) \cap \mathsf{span} \ \mathcal{F} = \emptyset\). This means that no non-zero linear combination of the functions \(f_1, f_2, \ldots, f_k\) in \(\mathsf{span} \ \mathcal{F}\) vanishes on all points in \(X\). Consequently, the evaluations of these functions at the points in \(X\) must be linearly independent, ensuring that \(\boldsymbol{X}(X, \mathcal{F})\) has full rank \(k\).

Conversely, suppose \(\rank \ \boldsymbol{X}(X, \mathcal{F}) = k\). This implies that the columns of the matrix (evaluations of the functions in \(\mathcal{F}\) at the points in \(X\)) are linearly independent. Hence, no non-zero linear combination of these functions can be zero on all points in \(X\), meaning \(I(X) \cap \mathsf{span} \ \mathcal{F} = \emptyset\).
\end{proof}

Left to discuss is the case when the rank of the design matrix is smaller than $k$, the number of functions or features one is trying to access. To introduce this discussion, let us give the simplest example:

\begin{example}
Suppose a family of functions $\mathcal{F} = \{f_1, f_2, \dots, f_k \}$ and a discrete set of points in the domain of the functions $X=\{x_1,x_2,\dots,x_p\}$ such that $\exists j : f_j (x) = 0,  \forall \ x \in X$. Assume this is the only linear combination in $\mathsf{span}\ \mathcal{F}$ that belongs to $I(X)$. Then, one can simply solve the problem exactly for $\mathcal{F}' = \{f_1, f_2, \dots, f_k \} \setminus {f_j}$, which has full rank $k-1$, and is able to find every field coefficient $\beta_i$ except for $\beta_j$.
\end{example}

However:
\begin{example}
Suppose a family of functions $\mathcal{F} = \{f_1, f_2, \dots, f_k \}$ and a discrete set of points in the domain of the functions $X=\{x_1,x_2,\dots,x_p\}$ such that $\exists j,l : f_j (x) + f_l (x) = 0,  \forall \ x \in X$. Assume this is the only linear combination in $\mathsf{span}\ \mathcal{F}$ that belongs to $I(X)$. We can never find the estimator for $\beta_j$ that does not contain an error proportional to $\beta_l$. To understand this create either the system of equations for $\mathcal{F}_j = \{f_1, f_2, \dots, f_k \} \setminus \{f_j\}$ and $\mathcal{F}_l = \{f_1, f_2, \dots, f_k \} \setminus \{ f_l\}$ and verify that each corresponds to a full rank system of $k-1$ equations, differing from the original problem by a term proportional to $\beta_l$ and $\beta_j$, respectively. 
\end{example}

From these two examples, the consequence should be apparent. One can only recover a linear subspace of coefficients $\beta_j$ that is the same size as the rank of the design matrix. Given that its maximum rank is the number of functions, which is the same as the number of coefficients, one could recover estimators for every $\beta_j$. If this is no longer true, then one can no longer get the same information as before, and the available information might have an error. Let us show what this error is in the following statement.

\begin{theorem}\label{thm:noninvertibility}
Let $\mathcal{F} = \{ f_1, f_2,\cdots, f_k\}$ and $X=\{ x_1, x_2,\dots,x_p\}$ be our regular set of holomorphic functions over some domain and a discrete set of points over that domain, respectively. Let $\mathcal{N}$ be the basis for the set of vectors in $\mathsf{span} \ \mathcal{F}$ that belong to $I(X)$. 
\begin{enumerate}[label=(\roman*)]
	\item $\rank \boldsymbol{X}(X,\mathcal{F}) = k - \mathsf{dim} \ \mathcal{N} = k'$
	\item One can build non-zero $ \boldsymbol{X}' = \boldsymbol{X}(X,\mathcal{F}') $, where $\mathcal{F} = \mathcal{F}' \cup  \mathcal{F}_0 $ and $\mathcal{F}'$ has dimension $k'$, that are full rank. Moreover, the choice is not unique
	\item Each $\beta_j$ correspondent to a function in $ \mathcal{F}'$ will have an error proportional to the $\beta_l$s correspondent to functions in $ \mathcal{F}_0$
\end{enumerate}
\end{theorem}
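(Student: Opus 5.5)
The plan is to reduce everything to linear algebra on the evaluation map. Define
\[
\mathrm{ev}_X : \mathsf{span}_{\mathbb{K}}\,\mathcal{F} \to \mathbb{K}^p, \qquad g \mapsto (g(x_1),\dots,g(x_p)).
\]
We identify $\mathsf{span}\,\mathcal{F}$ with $\mathbb{K}^k$ through the coefficient vector $\boldsymbol{b}$ of $g=\sum_j b_j f_j$. Under this identification $\mathrm{ev}_X$ is exactly multiplication by $\boldsymbol{X}(X,\mathcal{F})$. Its kernel consists of the combinations that vanish on every point of $X$, i.e.\ $\ker \boldsymbol{X} \cong \mathsf{span}\,\mathcal{F}\cap I(X)$, which is the span of $\mathcal{N}$. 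One caveat: if the $f_j$ are themselves linearly dependent as functions, the coefficient identification is not injective. In that case I would read $\mathcal{N}$ as a basis of coefficient vectors $\boldsymbol{b}$ with $\sum_j b_j f_j\in I(X)$, which is what the preceding examples implicitly do.

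\textbf{Step (i).} This follows from rank–nullity applied to $\boldsymbol{X}:\mathbb{K}^k\to\mathbb{K}^p$, giving $\rank \boldsymbol{X} = k-\dim\ker\boldsymbol{X} = k-\dim\mathcal{N}$. It is the same argument as the theorem just proved, which is the special case $\dim\mathcal{N}=0$.

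\textbf{Step (ii).} Since the column rank of $\boldsymbol{X}$ is $k'$, I choose $k'$ linearly independent columns indexed by $J$. I set $\mathcal{F}'=\{f_j\}_{j\in J}$ and $\mathcal{F}_0=\mathcal{F}\setminus\mathcal{F}'$. Then $\boldsymbol{X}'=\boldsymbol{X}(X,\mathcal{F}')$ is $p\times k'$ of full rank $k'$, and it is non-zero whenever $k'>0$. For non-uniqueness, take any $\boldsymbol{n}\in\mathcal{N}$ with $n_j\neq0$ for some $j\in J$ and $n_l\neq 0$ for some $l\notin J$. Swapping $j$ and $l$ (Steinitz exchange) gives another independent set of columns. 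I would also note that, when $\dim\mathcal{N}\ge 1$, such a pair exists: every null vector is supported partly outside $J$, because the columns in $J$ are independent. Uniqueness could fail to be witnessed only in the degenerate case where every null vector is supported inside $\mathcal{F}_0$ alone, as in the first example (a single $f_j$ vanishing on $X$). So I would state non-uniqueness as holding generically, or exactly when some null vector meets $J$.

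\textbf{Step (iii).} Each column $\boldsymbol{X}_l$ with $l\notin J$ lies in the column space of $\boldsymbol{X}'$, so $\boldsymbol{X}_l=\boldsymbol{X}'\boldsymbol{\gamma}^{(l)}$ with $\boldsymbol{\gamma}^{(l)}$ unique. Then
\[
\boldsymbol{F}=\boldsymbol{X}\boldsymbol{\beta}=\boldsymbol{X}'\Big(\boldsymbol{\beta}_J+\sum_{l\notin J}\beta_l\boldsymbol{\gamma}^{(l)}\Big).
\]
Applying a left inverse $\boldsymbol{X}'^{+}$ (with $\boldsymbol{X}'^{+}\boldsymbol{X}'=\mathbb{1}$, which exists by full rank, including the weighted $\boldsymbol{\Omega}$ version) gives
\[
\hat\beta_j=\beta_j+\sum_{l\notin J}\gamma^{(l)}_j\beta_l .
\]
The error is therefore linear in the $\beta_l$ of $\mathcal{F}_0$ with explicit coefficients. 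Moreover $\gamma^{(l)}_j$ is read off from null vectors: $\boldsymbol{e}_l-\sum_j\gamma^{(l)}_j\boldsymbol{e}_j\in\ker\boldsymbol{X}$. This matches the two examples, where the error vanishes exactly when the null vector does not couple $j$ to $\mathcal{F}_0$.

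The main obstacle is not the algebra but making the statement precise. Step (iii) literally says the error is proportional to the $\beta_l$, yet some $\gamma^{(l)}_j$ may be zero. I would phrase it as ``$\hat\beta_j-\beta_j$ lies in the span of $\{\beta_l\}_{l\notin J}$, with non-zero coefficient iff some null vector couples $j$ to $l$.'' Together with the non-uniqueness caveat in (ii), this is where most care is needed.
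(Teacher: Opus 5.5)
Your argument is the paper's own linear algebra made rigorous, not a different route. The paper proves (i) by noting that $\mathcal{N}$ is the null space of $\boldsymbol{X}$. For (ii)--(iii) it only says informally that vectors in $\mathcal{N}$ mark unrecoverable combinations of $\boldsymbol{\beta}$, so that only combinations orthogonal to $\mathcal{N}$ are error-free and ``multiple subspaces'' of size $k'$ can be reconstructed. Your column selection and the formula $\hat\beta_j=\beta_j+\sum_{l\notin J}\gamma^{(l)}_j\beta_l$ supply the missing content; the $k'$ estimated combinations span exactly $\mathcal{N}^\perp$, which also gives Corollary~\ref{thm:errorfreesubspace}. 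Your caveats are correct and go beyond the paper: its own first example (a single $f_j$ vanishing on $X$) admits a unique $\mathcal{F}'$, and $\hat\beta_j$ is exact whenever $\gamma^{(l)}_j=0$ for all $l\notin J$.

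One step in (ii) needs repair. It is not true that, for an arbitrary null vector $\boldsymbol{n}$ with $n_j\neq0$ ($j\in J$) and $n_l\neq0$ ($l\notin J$), swapping $j$ and $l$ keeps the columns independent. Take $\boldsymbol{X}_1=\boldsymbol{X}_3=(1,0)^T$, $\boldsymbol{X}_2=\boldsymbol{X}_4=(0,1)^T$, $J=\{1,2\}$ and $\boldsymbol{n}=(-1,-1,1,1)$. Then $n_1,n_4\neq0$, but columns $2,4$ are dependent. The exchange $j\leftrightarrow l$ is valid iff $\gamma^{(l)}_j\neq0$. So you must use the fundamental null vector $\boldsymbol{e}_l-\sum_{i\in J}\gamma^{(l)}_i\boldsymbol{e}_i$, which is the unique one (up to scale) supported in $J\cup\{l\}$.

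Your final criterion still survives. Every null vector equals $\sum_{l\notin J}n_l\big(\boldsymbol{e}_l-\sum_{i\in J}\gamma^{(l)}_i\boldsymbol{e}_i\big)$, so $n_j\neq0$ forces $\gamma^{(l)}_j\neq0$ for some $l$. Hence non-uniqueness holds iff some null vector meets $J$, equivalently iff some column outside $J$ is non-zero. You should, however, drop the sentence asserting that such a pair always exists when $\dim\mathcal{N}\ge1$: ``supported partly outside $J$'' supplies $l$ but not $j$.

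The same correction applies to your rephrasing of (iii). The coefficient of $\beta_l$ in $\hat\beta_j-\beta_j$ is $\gamma^{(l)}_j$, which is governed by the fundamental null vector of $l$, not by ``some null vector coupling $j$ to $l$''. In the example, $\boldsymbol{n}$ couples $1$ and $4$, yet $\hat\beta_1=\beta_1+\beta_3$ contains no $\beta_4$.
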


\begin{proof}
$[(i)]$ The proof is trivial as $\mathcal{N}$ is the null space of $\boldsymbol{X}$.
$[(ii),(iii)]$ This is a consequence of having combinations of the coefficients $\beta_j$ that cannot be found. In particular, the vectors $\boldsymbol{n}\in\mathcal{N}$ define combinations for the $\beta$s that are not available. This means that we are only able to find combinations of $\beta$s without any error that are orthogonal to all vectors in $\mathcal{N}$. This set is the size of $k'$, and can be used to reconstruct multiple subspaces of $\beta$s of the same size, with errors in an orthogonal subspace. 
\end{proof}

This implies that the non-invertibility of the design matrix provides the error of our estimators for the coefficients. The theorem above implies that there is a deep connection between what we are trying to estimate, and how we place our sensors. As a consequence of Thm.~\ref{thm:noninvertibility} one can also extract the following:

\begin{corollary}[Error-free Subspace]\label{thm:errorfreesubspace}
Let again $\mathcal{F} = \{ f_1, f_2,\cdots, f_k\}$ and $X=\{ x_1, x_2,\dots,x_p\}$ be our regular set of holomorphic functions over some domain, and a discrete set of points over that domain, respectively. Let $\mathcal{N}$ be the basis for the set of vectors in $\mathsf{span} \ \mathcal{F}$ that belong to $I(X)$, and $\mathcal{N}^\perp$ its orthogonal subspace. Any combination $\boldsymbol{n}^\perp\cdot \boldsymbol{\beta}$ for $\boldsymbol{n}^\perp \in \mathcal{N}^\perp$ can be estimated without a construction error.
\end{corollary}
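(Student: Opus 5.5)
The plan is to reduce the corollary to a statement about the row space of the design matrix. Identify a function $g = \sum_j \gamma_j f_j \in \mathsf{span}\,\mathcal{F}$ with its coefficient vector $\boldsymbol{\gamma} \in \mathbb{K}^k$. Then $g \in I(X)$ exactly when $g(x_i) = 0$ for all $x_i \in X$, i.e.\ when $\boldsymbol{X}\boldsymbol{\gamma} = 0$. So the space spanned by $\mathcal{N}$ is precisely $\ker \boldsymbol{X}(X,\mathcal{F})$, as in part (i) of Thm.~\ref{thm:noninvertibility}. By the fundamental theorem of linear algebra, $\mathcal{N}^\perp = (\ker \boldsymbol{X})^\perp = \mathrm{row}(\boldsymbol{X}) = \mathrm{im}(\boldsymbol{X}^T)$, using the standard bilinear pairing on $\mathbb{K}^k$ (over $\mathbb{R}$ this is the usual orthogonal complement).

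Given $\boldsymbol{n}^\perp \in \mathcal{N}^\perp$, choose $\boldsymbol{c} \in \mathbb{K}^p$ with $\boldsymbol{X}^T\boldsymbol{c} = \boldsymbol{n}^\perp$. Such a $\boldsymbol{c}$ exists because $\boldsymbol{n}^\perp$ lies in $\mathrm{im}(\boldsymbol{X}^T)$; concretely one may take $\boldsymbol{c} = (\boldsymbol{X}^T)^+ \boldsymbol{n}^\perp$ with the Moore--Penrose pseudo-inverse. When the field lies in the model, $\boldsymbol{F} = \boldsymbol{X}\boldsymbol{\beta}$ holds exactly by Eq.~\ref{eq:field}. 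Hence
\[
\boldsymbol{c}\cdot\boldsymbol{F} = \boldsymbol{c}^T\boldsymbol{X}\boldsymbol{\beta} = (\boldsymbol{X}^T\boldsymbol{c})\cdot\boldsymbol{\beta} = \boldsymbol{n}^\perp\cdot\boldsymbol{\beta}
\]
for every $\boldsymbol{\beta}$. This is an identity, with no residual term in any $\beta_l$, which is what ``no construction error'' means. This is the estimator of the form $\boldsymbol{c}\cdot\boldsymbol{F}$ required by the framework.

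To connect with part (iii) of Thm.~\ref{thm:noninvertibility}, I would also show the converse. Any linear estimator satisfies $\boldsymbol{c}\cdot\boldsymbol{F} = (\boldsymbol{X}^T\boldsymbol{c})\cdot\boldsymbol{\beta}$. Splitting a target $\boldsymbol{b} = \boldsymbol{b}^\perp + \boldsymbol{b}_{\mathcal{N}}$ then shows that the error $\boldsymbol{b}\cdot\boldsymbol{\beta} - \boldsymbol{c}\cdot\boldsymbol{F}$ always contains $\boldsymbol{b}_{\mathcal{N}}\cdot\boldsymbol{\beta}$. So the error-free combinations are exactly $\mathcal{N}^\perp$. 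The same argument gives the GLS version: $\boldsymbol{b}^T(\boldsymbol{X}^T\boldsymbol{\Omega}^{-1}\boldsymbol{X})^+\boldsymbol{X}^T\boldsymbol{\Omega}^{-1}\boldsymbol{X} = \boldsymbol{b}^T$ for $\boldsymbol{b}$ in the row space, since this product is a projector onto $\mathrm{row}(\boldsymbol{X})$ when $\boldsymbol{\Omega}$ is positive definite.

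The main obstacle is interpretive rather than computational. ``Orthogonal subspace'' must be read in the coefficient space $\mathbb{K}^k$ and not in the function space. Over a general field $\mathbb{K}$ (e.g.\ $\mathbb{C}$), $\mathcal{N}^\perp$ has to be defined via the bilinear pairing, not the Hermitian one, so that $\mathcal{N}^\perp = \mathrm{im}(\boldsymbol{X}^T)$ holds. I would state this convention explicitly and then apply rank--nullity, $\dim \mathcal{N}^\perp = k - \dim\mathcal{N} = k'$, consistent with part (i).
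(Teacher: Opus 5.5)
Your proposal is correct and is essentially the paper's argument, which obtains the corollary from Thm.~\ref{thm:noninvertibility} by identifying $\mathcal{N}$ with $\ker\boldsymbol{X}$ and asserting that combinations orthogonal to it are recoverable without error; your identification $\mathcal{N}^\perp=\mathrm{im}(\boldsymbol{X}^T)$ and the choice of $\boldsymbol{c}$ with $\boldsymbol{X}^T\boldsymbol{c}=\boldsymbol{n}^\perp$ supply the sufficiency step that the paper's sketch leaves implicit. One small caveat on your optional converse: the splitting $\boldsymbol{b}=\boldsymbol{b}^\perp+\boldsymbol{b}_{\mathcal{N}}$ requires $\ker\boldsymbol{X}\cap\mathcal{N}^\perp=\{0\}$, which holds over $\mathbb{R}$ but can fail for the bilinear pairing over $\mathbb{C}$ (e.g.\ $\ker\boldsymbol{X}=\mathsf{span}\{(1,i)\}$), so there it is cleaner to argue directly that $\boldsymbol{c}\cdot\boldsymbol{F}=\boldsymbol{b}\cdot\boldsymbol{\beta}$ for all $\boldsymbol{\beta}$ forces $\boldsymbol{b}=\boldsymbol{X}^T\boldsymbol{c}$.
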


This extends the results in \cite{Sekatski2019,Qian2019,Qian2020,Bringewatt2021,Hamann2023}, as we provide a comprehensive analysis about wether or not one is able to find the ``orthogonal subspace'', and if its construction has any error. This has an interesting consequence: knowledge of the model allows one to, by choosing specific positions, ignore some signals, and potentially require less sensors. This can be done by placing the sensors in positions for which the design matrix is not full-rank for the given model. However, the target property, represented by the vector $\boldsymbol{n}^\perp$ must be orthogonal to the null-space of the design matrix.


\subsection{Example: Gravitic Field Invariant Sensing}

As an example of a problem where we take advantage of Corol.~\ref{thm:errorfreesubspace}, let us present a situation where we have information about a mass from an object, and place the sensors in a way such that the estimation is oblivious to that mass, and able to retrieve information about other surrounding masses:

\begin{problem}
Suppose we have 6 sensors (blue dots) along a circle with radius $R$, $X = \{ x_1, x_2, x_3, x_4, x_5 \}$. There exists a mass along a line crossing the center of the circle, at depth $z_1$ (black dot), and three masses along a circle with radius $r$ at a depth of $z_1$, at positions $Z =  \{z_2, z_3, z_4\}$ (red dots). This corresponds to a model of the field given by:
\[
	\hat{F}(\boldsymbol{x}) = \sum_{i=1}^4 \beta_i \cdot \frac{1}{\norm{\boldsymbol{x}-\boldsymbol{z}_i}^2} + \beta_0
\]
\begin{center}
\begin{tikzpicture}[line cap=round,line join=round,scale=2,3d view={120}{30}]
\draw[-latex] (0,0,0) -- (1.5,0,0);
\draw[-latex] (0,0,0) -- (0,1.5,0);
\draw[-latex] (0,0,0) -- (0,0,-1.5);

\mycircle{0}{0}{1}{1}{0}
\mycircle{0}{0}{1}{0.5}{-1}

\fill[black] (0,0,-1) circle(0.05); 

\fill[red] (0.25,0.4330127019,-1) circle(0.05); 
\fill[red] (-0.5,0,-1) circle(0.05); 
\fill[red] (0.25,-0.4330127019,-1) circle(0.05); 

\foreach \i in {0,1,...,5} {
    \pgfmathsetmacro{\theta}{360*\i/6} 
    \pgfmathsetmacro\x{cos(\theta)} 
    \pgfmathsetmacro\y{sin(\theta)} 
    \pgfmathsetmacro\z{0} 
    \fill[blue] (\x,\y,\z) circle(0.05); 
}

\end{tikzpicture}
\end{center}

Given the sensor positions, and local evaluation of the field $\boldsymbol{F} = \{F(x)\}_{x \in X}$, find the coefficent of the source at $z_2$.
\end{problem}

Using the methodology introduced in this section, the solution of the problem comes from finding the estimator for the value $\beta_2$. Moreover, since all sensors lie in a region for which the source at $z_1$ provides the same signal strength, then they cannot estimate it, albeit being able to discern all the other values. The solution comes from building the design matrix and then the vector $\boldsymbol{c}$ in the following way:
\begin{equation}
	\beta_1 = \left[ (\boldsymbol{X}^T)^{+} \boldsymbol{e}_2 \right] \cdot \boldsymbol{F} = \boldsymbol{c} \cdot \boldsymbol{F}, \ \  (X)_{ij} =  \frac{1}{\norm{\boldsymbol{x}_i-\boldsymbol{z}_j}^2}
\end{equation}
Note that there are multiple pseudo-inverses, but the target vector $\boldsymbol{e}_2$ is orthogonal to the subspace with error, which is generated by $\boldsymbol{e}_0$ and $\boldsymbol{e}_1$, corresponding to the constant signal coming from the constant background $\beta_0$ and the signal of the mass in $z_1$.

\section{Discussion and Outlook}

As stated in the beginning of this work, we have systematically described a wide-range of problems for which of entangling a network of quantum sensors would provide a fundamental advantage. This advantage comes directly from the non-locality of what a quantum strategy is capable of obtaining. To be able to describe an estimator for a property of a field by a linear combination of a set of local values that are accessible by quantum sensors is the key point of this work. In doing so, one creates a class of problems for which quantum advantage appears naturally. In particular, the set of quantum strategies that work particularly well, are know to be achievable when the dynamics ruling the encoding are single-qubit dynamics. In the latter case, there is a set of commuting measurements that attain the Crámer-Rao bound and, therefore, make it suitable to apply. 

On the other hand, our description clarifies the consequences of the position of the sensors and how it impacts the estimators errors, which are shown to be tied to the models of our fields. Among our findings lie the error-free subspaces of the parameters, which we prove to have no error of construction. This means that their expected value has no error other than what the model of the field is able to capture. In doing so, one can also reduce the requirements for the number of sensors. Using this framework, one can also think of finding optimal placement of the sensors for a given model, which we leave for future work. 

By employing the tools we have used, we also allow, not only to capture a wider set of problems that overlaps with multi-party quantum computation, and even quantum machine learning, but we also use the same tools to discuss additional questions. These additional questions are linked with the field of hypothesis testing and modeling the real-world, for which there is plenty of literature in classical theory. We believe that this formulation helps to connect with classical estimation problems, especially in understanding which questions about the field one is able to answer given access to sets of local data.

Hard questions, such as what positioning of the sensors better allows parameters of the model to be measured, can now be solved by a comprehensive analysis over concepts in analytical geometry. For this, the literature in \cite{FOAG} can be of great help, and the design of experiments for quantum distributed sensing could be further simplified.

\section*{Acknowledgements}

The authors acknowledge the support from the EU Quantum Flagship project QIA (101102140) and France 2030 under the French National Research Agency projects HQI ANR-22-PNCQ-0002 and the PEPR integrated project EPiQ ANR-22-PETQ-0007. L.B. and Y.O. thank the support from Funda\c c\~ao para a Ci\^encia e a Tecnologia (FCT, Portugal), namely through project UIDB/04540/2020 and contract LA/P/0095/2020. L.B. acknowledges the support of FCT through scholarship BD/05268/2021.

\appendix

\section{Invertibility
of Vandermonde Matrix \label{appendix:vandermonde}}

In this appendix we pick up on the informal theorem (Thm.~\ref{thm:informal}) in the main text to prove a more formal version of it, using tools of algebraic geometry, for which we give a brief introduction focusing on the aspects we will use in the Appendix~\ref{appendix:algebraic}. Let us start by stating a condition:

\begin{condition}\label{condition:invertibility1}
Let $X$ be a subset in $\mathbb{K}^m$ and $L=\{ p_1, p_2, \cdots , p_n \}$ a set of polynomials in $m$ variables. Denote its span $\mathsf{span} L$, which is the vector space generated from taking linear combinations of polynomials in $L$. Let 
\[ 
I(X)=\langle f_1,f_2, \cdots, f_k  \rangle = \left\{ \sum_{i=1}^k r_i f_i \  \Big| \ r_i \in \mathbb{K}[x_1,x_2,\cdots,x_m] \right\}
\]
 be the ideal of $X$ in the polynomial ring $\mathbb{K}[x_1,x_2,\cdots,x_m]$. The condition can be present then as:
\begin{equation*}
	\mathsf{span} L \cap I(X) = \emptyset .
\end{equation*}
\end{condition}

Using this condition, we can state a theorem on the invertibility of the matrix.  We will state the theorem with respect to the rank of the Vandermonde matrix, which implies its invertibility. Later on, this will give us straightforwardly new results under different assumptions when the matrices are no longer square.

\begin{theorem}\label{thm:invertibility1}
Let $X$ be a discrete subset of points in $\mathbb{K}^m$ containing $|X|=p$ different elements, and $L$ a set of monomials of the same size. Let $V(X,L)$ be the Vandermonde matrix taking points in $X$ and functions in $L$.
\begin{equation*}
	\text{Condition~\ref{condition:invertibility1} is verified for $X,L$ } \Leftrightarrow \mathsf{rank} \ V(X,L) = p
\end{equation*}
\end{theorem}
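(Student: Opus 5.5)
The proof is a linear-algebra identification between polynomials in $\mathsf{span}\,L$ and coefficient vectors, combined with the definition of the vanishing ideal $I(X)$. Write $L=\{p_1,\dots,p_p\}$, the monomials indexing the columns of $V(X,L)$, with $(V)_{ij}=p_j(x_i)$. Consider the evaluation map
\[
\mathrm{ev}_X:\mathsf{span}\,L\longrightarrow \mathbb{K}^p,\qquad q\longmapsto (q(x_1),\dots,q(x_p)).
\]
The monomials in $L$ are distinct, so they are linearly independent in $\mathbb{K}[x_1,\dots,x_m]$. Hence every $q\in\mathsf{span}\,L$ is written uniquely as $q=\sum_j a_jp_j$ with coefficient vector $\boldsymbol{a}\in\mathbb{K}^p$. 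In these coordinates $\mathrm{ev}_X(q)=V(X,L)\boldsymbol{a}$, so $\mathrm{ev}_X$ is represented by the matrix $V(X,L)$.

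The next step is to identify $\ker \mathrm{ev}_X$ with $\mathsf{span}\,L\cap I(X)$. By definition $I(X)$ is the set of polynomials vanishing at every point of $X$. A polynomial $q\in\mathsf{span}\,L$ satisfies $V(X,L)\boldsymbol{a}=0$ exactly when $q(x_i)=0$ for all $i$, that is, exactly when $q\in I(X)$. The generators $f_1,\dots,f_k$ of $I(X)$ play no role; only the membership characterisation is used.

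Condition~\ref{condition:invertibility1} is written as $\mathsf{span}\,L\cap I(X)=\emptyset$. Taken literally this can never hold, since $0$ lies in both sets. I would therefore read it, as the informal Theorem~\ref{thm:informal} does, as $\mathsf{span}\,L\cap I(X)=\{0\}$: no nonzero polynomial in the span vanishes on $X$. Under this reading the equivalence follows from rank--nullity. Condition~\ref{condition:invertibility1} holds if and only if $\ker V(X,L)=\{0\}$, which holds if and only if $\mathsf{rank}\,V(X,L)=p$ (the matrix is $p\times p$), which holds if and only if $V(X,L)$ is invertible.

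Each direction can also be argued by contraposition. If $\mathsf{rank}\,V(X,L)<p$, a nonzero kernel vector $\boldsymbol{a}$ gives a nonzero $q=\sum_j a_jp_j$, nonzero because the $p_j$ are independent, and $q$ lies in $I(X)$. Conversely, a nonzero $q\in\mathsf{span}\,L\cap I(X)$ has a nonzero coefficient vector lying in $\ker V(X,L)$.

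The main obstacle is interpretive rather than technical: fixing $\emptyset$ versus $\{0\}$ in the condition. The one substantive check is that distinct monomials are linearly independent over $\mathbb{K}$, so that nonzero coefficient vectors correspond to nonzero polynomials. Distinctness of the points of $X$ is not needed for the equivalence itself; it only prevents rank deficiency from repeated rows. The same argument carries over to the non-square design matrix of the later theorem, with $p$ replaced by $k$ and rank $k$ meaning an injective evaluation map.
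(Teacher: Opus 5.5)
Your proof is correct and follows the same route as the paper: identify the kernel of $V(X,L)$ with the polynomials in $\mathsf{span}\,L$ that vanish on $X$, that is, with $\mathsf{span}\,L\cap I(X)$, and conclude via linear independence of the columns. Two points that the paper's argument leaves implicit are made explicit in yours: reading the condition as $\mathsf{span}\,L\cap I(X)=\{0\}$, and using the linear independence of distinct monomials so that nonzero coefficient vectors give nonzero polynomials.
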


\begin{proof}
The proof is straightforward when taking the definition of ideals and the function $I(\cdot)$ that takes an affine subspace in $\mathbb{K}^m$ to the polynomials in the polynomial ring that vanish at that subspace. In particular, if the rank of $V(X,L)$ is $p$ is equivalent to proving that all lines (or columns as $V$ is square) are linearly independent. In fact, from the condition stated we get that there is no vector $\boldsymbol{a}$ such that $\sum_{i=1}^{n} a_i p_i \in I(X)$. In particular, this means $\sum_{i=1}^{n} a_i p_i (x) = 0 \forall x \in X$ is never true, which is equivalent as saying that no linear combination of the columns of the Vandermonde matrix sums to 0, meaning, they are all linearly independent.
\end{proof}

Using this, one can look at the ideal of a set of points and directly make a statement about which set of derivatives $L$ one can find, $i.e.$ which system is invertible, guaranteeing an error of interpolation in the order of the derivatives with corresponding degrees in the cover of $L$, $\mathsf{Cov}(L)$. Moreover, the converse statement is also true: if we find a non-invertible matrix for a set of points, then those points belong to a variety spanned by the set of polynomials that make up the Vandermonde matrix. In Appendix~\ref{appendix:algebraic} we give some additional tools to understand the examples below.

Let us provide two important examples for possible configurations of sensors that generate invertible Vandermonde matrices, which we will then use together with a set of transformations on the positions that preserves the invertibility. Note that in these examples $V(\cdot)$ stands for the zero-locus of an set of polynomials, according to what is defined in Appendix~\ref{appendix:algebraic}, and not to the Vandermonde matrix.

\begin{example}
Let $X$ be the set of points across an equally spaced grid with $m$ points in each side, in 2 dimensions:
\begin{equation}
	X = \{  (  a_i,  b_i ), \text{ for }  a_i, b_i = 0, 1, \cdots , m-1 \} =  L.
\end{equation}
Let $I(X) = \langle (x-a_1)(x-a_2)\cdots (x-a_m) , (y-a_1)(y-a_2)\cdots (y-a_m) \rangle = \langle I_1, I_2 \rangle$ be the corresponding generating minimal ideal. The minimal simply restricts the generator set the the minimal set, meaning if we remove any, we do not get $X$.
\begin{center}
\begin{tikzpicture}[scale=0.5]
\draw[help lines, color=gray!30, dashed] (0,0) grid (4.9,4.9);
\draw[->,ultra thick] (0,0)--(5,0) node[right]{$x$};
\draw[->,ultra thick] (0,0)--(0,5) node[above]{$y$};
\foreach \x in {1,2,3,4,5 }{
   \foreach \y in {1,2,3,4,5}{
	\node[orange] at (\x,\y) [circle,fill,inner sep=1.pt]{};
    }
}
\draw[-,thick] (-1,2.5)--(-1,2.5) node[above]{$X = $};
\draw[-,thick] (6,2.5)--(6,2.5) node[above]{$ = $};
\end{tikzpicture} 
\begin{tikzpicture}[scale=0.5]
\draw[help lines, color=gray!30, dashed] (0,0) grid (4.9,4.9);
\draw[->,ultra thick] (0,0)--(5,0) node[right]{$x$};
\draw[->,ultra thick] (0,0)--(0,5) node[above]{$y$};
\draw[-,thick,orange] (1,0)--(1,5) node[above]{$a_1$};
\draw[-,thick,orange] (2,0)--(2,5) node[above]{$a_2$};
\draw[-,thick,orange] (3,0)--(3,5) node[above]{$a_3$};
\draw[-,thick,orange] (4,0)--(4,5) node[above]{$a_4$};
\draw[-,thick,orange] (5,0)--(5,5) node[above]{$a_5$};
\node at (2.5,6.5) {$V(I_1)=X_1$};
\end{tikzpicture} 
\hspace{-0.5cm}
\begin{tikzpicture}[scale=0.5]
\draw[help lines, color=gray!30, dashed] (0,0) grid (4.9,4.9);
\draw[->,ultra thick] (0,0)--(5,0) node[right]{$x$};
\draw[->,ultra thick] (0,0)--(0,5) node[above]{$y$};
\draw[-,thick,orange] (0,1)--(5,1) node[above]{$b_1$};
\draw[-,thick,orange] (0,2)--(5,2) node[above]{$b_2$};
\draw[-,thick,orange] (0,3)--(5,3) node[above]{$b_3$};
\draw[-,thick,orange] (0,4)--(5,4) node[above]{$b_4$};
\draw[-,thick,orange] (0,5)--(5,5) node[above]{$b_5$};
\draw[-,thick] (-1,2.5)--(-1,2.5) node[above]{$\bigcap $};
\node at (2.5,6.5) {$V(I_2)=X_2$};
\end{tikzpicture}
\end{center}

This means that the first variety that holds all points of the grid has \footnote{We used the notation $deg_{x_i}$ to denote the maximum degree of the coordinate $x_i$ in the polinomial corresponding to the variety.}$deg_x = deg_y = m$. In particular this means that we can always guarantee invertibility if we choose monomials $x^\alpha y^\beta$ such that $0 \leq \alpha ,\beta < m$, as they can never span a polynomial with the minimum degree of ideal of $X$. This is exactly choosing the lower set $X \subset (\mathbb{N}_0^2,\preceq)$ for the exponents of the monomials.

\end{example}

The next step is taking a more complex lower set, $i.e.$ one with a border (or Pareto front) larger than 1.

\begin{example}\label{example:borderideals}
Let $X$ be a lower set of $(\mathbb{N}_0^2,\preceq)$ with the regular product order. Let the border of $X$, $\partial X$, (or Pareto front) be given by the set of points $\partial X = \{ (A_i, B_i) \}$. Without loss of generality let $\partial X$ be ordered by the lexicographic order, ascending in the first coordinate. Consider the following family of subsets with corresponding ideals:

\begin{center}
\begin{tikzpicture}[scale=0.5]
\draw[help lines, color=gray!30, dashed] (0,0) grid (6.9,6.9);
\draw[->,ultra thick] (0,0)--(7,0) node[right]{$x$};
\draw[->,ultra thick] (0,0)--(0,7) node[above]{$y$};
\draw[-,thick,orange] (4,2)--(4,2) node[above]{$\cdots$};
\draw[-,thick,orange] (2,3.5)--(2,3.5) node[above]{$\vdots$};
\foreach \x in {1,2,4,5,6 }{
   \foreach \y in {1,...,\x}{
   	\ifnum \y = 4
		
	\else
		\node[orange] at (7-\x,\y) [circle,fill,inner sep=1.pt]{};
	\fi
    }
}
\node[orange] at (6,2) [circle,fill,inner sep=1.5pt]{};
\node[orange,above,right] at (6,2) {\scriptsize$A_1,B_1$};
\node[orange] at (5,3) [circle,fill,inner sep=1.5pt]{};
\node[orange,above,right] at (5,3) {\scriptsize$A_2,B_2$};
\node[orange] at (2,6) [circle,fill,inner sep=1.5pt]{};
\node[orange,above,right] at (2,6) {\scriptsize$A_p,B_p$};
\node[orange] at (3,5) [circle,fill,inner sep=1.0pt]{};
\draw[-,thick] (-1,3)--(-1,3) node[above]{$X = $};
\draw[-,thick] (8,3)--(8,3) node[above]{$= $};
\end{tikzpicture} 
\begin{tikzpicture}[scale=0.5]
\draw[help lines, color=gray!30, dashed] (0,0) grid (6.9,6.9);
\draw[->,ultra thick] (0,0)--(7,0) node[right]{$x$};
\draw[->,ultra thick] (0,0)--(0,7) node[above]{$y$};
\draw[-,thick,orange] (1,0)--(1,7) node[above]{$a_1$};
\draw[-,thick,orange] (2,0)--(2,7) node[above]{$a_2$};
\draw[-,thick,orange] (3,0)--(3,7) node[above]{$a_3$};
\draw[-,thick,orange] (4,3.2)--(4,3.2) node[above]{$\cdots$};
\draw[-,thick,orange] (5,0)--(5,7) node[above]{$\scalebox{0.9}[1]{\medmuskip=-2mu$A_1\shortminus1$}$};
\draw[-,thick,orange] (6,0)--(6,7) node[above]{$A_1$};
\node at (3.5,8.5) {$V(I_1)=X_1$};
\end{tikzpicture} 
\hspace{-0.5cm}
\begin{tikzpicture}[scale=0.5]
\draw[help lines, color=gray!30, dashed] (0,0) grid (6.9,6.9);
\draw[->,ultra thick] (0,0)--(7,0) node[right]{$x$};
\draw[->,ultra thick] (0,0)--(0,7) node[above]{$y$};
\draw[-,thick,orange] (1,0)--(1,7) node[above]{$a_1$};
\draw[-,thick,orange] (2,0)--(2,7) node[above]{$a_2$};
\draw[-,thick,orange] (3,0)--(3,7) node[above]{$a_3$};
\draw[-,thick,orange] (4,3.2)--(4,3.2) node[above]{$\cdots$};
\draw[-,thick,orange] (5,0)--(5,7) node[above]{$A_2$};
\draw[-,thick,orange] (0,1)--(7,1) node[above]{$B_1-1$};
\draw[-,thick,orange] (0,2)--(7,2) node[above]{$B_1$};
\draw[-,thick] (-1,3)--(-1,3) node[above]{$\bigcap $};
\node at (3.5,8.5) {$V(I_2)=X_2$};
\end{tikzpicture}

\hspace{3.5cm}
\begin{tikzpicture}[scale=0.5]
\draw[help lines, color=gray!30, dashed] (0,0) grid (6.9,6.9);
\draw[->,ultra thick] (0,0)--(7,0) node[right]{$x$};
\draw[->,ultra thick] (0,0)--(0,7) node[above]{$y$};
\draw[-,thick,orange] (1,0)--(1,7) node[above]{$a_1$};
\draw[-,thick,orange] (2,0)--(2,7) node[above]{$A_p$};
\draw[-,thick,orange] (0,3)--(7,3) node[above]{$b_3$};
\draw[-,thick,orange] (4,3.2)--(4,3.2) node[above]{$\cdots$};
\draw[-,thick,orange] (0,5)--(7,5) node[above]{$B_{p-1}$};
\draw[-,thick,orange] (0,2)--(7,2) node[above]{$b_2$};
\draw[-,thick,orange] (0,1)--(7,1) node[above]{$b_1$};
\draw[-,thick] (-2,3)--(-2,3) node[above]{$(\cdots) \quad \bigcap $};
\node at (3.5,8.5) {$V(I_p)=X_p$};
\end{tikzpicture} 
\hspace{-0.5cm} 
\begin{tikzpicture}[scale=0.5]
\draw[help lines, color=gray!30, dashed] (0,0) grid (6.9,6.9);
\draw[->,ultra thick] (0,0)--(7,0) node[right]{$x$};
\draw[->,ultra thick] (0,0)--(0,7) node[above]{$y$};
\draw[-,thick,orange] (0,1)--(7,1) node[above]{$b_1$};
\draw[-,thick,orange] (0,2)--(7,2) node[above]{$b_2$};
\draw[-,thick,orange] (0,3)--(7,3) node[above]{$b_3$};
\draw[-,thick,orange] (3.2,4)--(3.2,4) node[above]{$\vdots$};
\draw[-,thick,orange] (0,5)--(7,5) node[above]{$\scalebox{0.9}[1]{\medmuskip=-2mu$b_{m-1}$}$};
\draw[-,thick,orange] (0,6)--(7,6) node[above]{$B_p$};
\draw[-,thick] (-1,3)--(-1,3) node[above]{$\bigcap $};
\node at (3.5,8.5) {$V(I_{p+1})=X_{p+1}$};
\end{tikzpicture}

\end{center}

Where we define $X_\alpha $, $\alpha \in 1,2,\cdots , |\partial X|+1 = p+1$ according to the successive subsets above. Each of them has corresponding ideal given by:
\[
 I_\alpha = \prod_{a_i = a_1}^{A_\alpha} (x-a_i) \prod_{b_i = b_1}^{B_{\alpha-1}} (y-b_i).
\]

We get that $ V(I) = V(\bigcup_\alpha I_\alpha ) = \bigcap_\alpha V(I_\alpha) = \bigcap_\alpha X_\alpha = X$. Note again that the generating set for the ideal $I$ is minimal in the sense that if we remove any, we no longer get $X$. This means that the varieties that hold all points of the grid have $(deg_x, deg_y) \in \mathsf{Cov}(L)$. This implies we can always guarantee invertibility if we choose monomials $x^\alpha y^\beta$ such that $( \alpha ,\beta )  \in L$, as they are covered by $\mathsf{Cov}(L)$. 
\end{example}

\section{Algebraic Geometry - important concepts \label{appendix:algebraic}}

In the main text, to verify that lower sets on the space of sensors, with accompanying lower sets on the space of monomials result in a invertible Vandermonde matrix, we resort to tools from algebraic geometry to prove the space of varieties containing the points of a lattice that form a lower set implies a lower set structure on the set of monomials. Let us give a very brief set of important definitions and theorems in algebraic geometry to provide some context to our methods. The intuition behind the use of algebraic geometry is so that we are able characterize subspaces of the affine space, as the zeros of subsets of polynomial function. We refer to the following set of books and notes for further inspection \cite{FOAG,Milne2023,Fulton2008}.

From algebraic theory we get that a set $X$ in an affine space $\mathbb{A}^m$ can be seen as the zero set of an ideal in the polynomial ring $\mathbb{k}\left[x_1, \ldots, x_n\right]$. This is exactly the definition of an algebraic set:

\begin{equation}
\mathrm{V}(S)=\left\{p \in \mathbb{A}^m \mid f(p)=0 \text { for all } f \in S\right\}=\bigcap_{f \in S} \mathrm{~V}(f)
\end{equation}

While $S$ in general does not need to be an ideal, one can always find an ideal $I$ which is generated by $S$, such that $V(S) = V(\langle S \rangle) = V(I)$. Here, ideals are defined over the ring in question, in this case the polinomial ring over some algebraically closed field, and has similar properties as an order ideal:

\begin{definition}[Ideal]
Let $(R,+,\cdot)$ be a ring and $(R,+)$ its additive group. We call ideal to a subset $I$ such that:
\begin{enumerate}
	\item $(I,+)$ is a sub-group of $(R,+)$
	\item $\forall \ x \in I , y \in R$ implies $y x \in I$ 
\end{enumerate}
\end{definition}
This definition is technically for right-side ideals, with similar ones for left-side ideals. If a ideal is both right and left sided, than it is two-sided ideal, which from now on we will always assume they are, as to work with $\mathbb{R}$ and $\mathbb{C}$ is always the case, or any commutative ring.

There is as well a function that does in inverse statement, $i.e.$ pulls back an ideal from the set of points $X$ in the affine-space $\mathbb{A}^m$:

\begin{equation}
\mathrm{I}(X)=\left\{f \in \mathbb{k}\left[x_1, \ldots, x_n\right] \mid f(p)=0 \text { for all } p \in X\right\}
\end{equation}

With this, there comes a bunch of interesting properties, which one can verify over any of the suggested books \cite{FOAG,Milne2023,Fulton2008}:

\begin{enumerate}
	\item If $ S \subseteq T \subseteq \mathbb{k}\left[x_1, \ldots, x_n\right] \text { then } \mathrm{V}(T) \subseteq \mathrm{V}(S)$
	\item If $\{ I_\alpha \}$ is a collection of ideals, then $\mathrm{V}\left(\bigcup_\alpha I_\alpha\right)=\bigcap_\alpha \mathrm{V}\left(I_\alpha\right)$
	\item $V(IJ) = V(I) \cup V(J)$
	\item $\mathrm{V}(0)=\mathbb{A}^m, \mathrm{~V}(1)=\varnothing$, and $\mathrm{V}\left(x_1-a_1, \ldots, x_m-a_m\right)=\left\{\left(a_1, \ldots, a_m\right)\right\}$
	\item If $X \subseteq Y \subseteq \mathbb{A}^m$ then $\mathrm{I}(Y) \subseteq \mathrm{I}(X)$
	\item $\mathrm{I}(\varnothing)=\mathbb{k}\left[x_1, \ldots, x_n\right] $. $\mathrm{I}\left(\left\{\left(a_1, \ldots, a_n\right)\right\}\right)=\left\langle x_1-a_1, \ldots, x_m-a_m\right\rangle$ for any point $\left(a_1, \ldots, a_m\right) \in \mathbb{A}^m$. $\mathrm{I}\left(\mathbb{A}^n\right)=0$ if $\mathbb{k}$ is infinite. 
	\item $S \subseteq \mathrm{I}(\mathrm{V}(S))$ for any set of polynomials $S \subseteq \mathbb{k}\left[x_1, \ldots, x_n\right]$. $X \subseteq \mathrm{V}(\mathrm{I}(X))$ for any set of points $X \subseteq \mathbb{A}^m$.
	\item $\mathrm{V}(\mathrm{I}(\mathrm{V}(S)))=\mathrm{V}(S)$ for any set of polynomials $S \subseteq \mathbb{k}\left[x_1, \ldots, x_n\right]$. $\mathrm{I}(\mathrm{V}(\mathrm{I}(X)))=\mathrm{I}(X)$ for any set of points $X \subseteq \mathbb{A}^m$.
\end{enumerate}

Intuitively, the main idea behind finding the ideals, is that they should give the smallest polynomials under which composition into a larger polynomial (in terms of degree) can be made while maintaining the zero locus of the smallest polynomials. This means that polynomials composed from an ideal will necessarily have the same zeros as the polynomials in the ideal. Another important observation from the properties, and somewhat counterintuitive, is the overlap structure of the different sets. The union of subsets of affine spaces is contained in the intersection of their ideals, and vice-versa. And the larger the description of an ideal, the smaller the collection of points (in terms of dimension). Let us give some examples:

\begin{example}
Let $\mathbb{k} = \mathbb{R}$, $m=2$, so we are working in the plane, where lines define the hyper surface and points the dimensionless algebraic varieties. A line $x-a = 0$ is defined by the ideal $I_1 = \langle x-a \rangle $. A point $(b,c)$ is defined by the ideal $I_2=\langle x-b, y-c \rangle$. The union of a line and a point is given by the ideal $\langle (x-a) (x-b) , (x-a)(y-c) \rangle=I_1 I_2$ .
\end{example}

However, not all ideals correspond to sets of points in the affine space:

\begin{example}
Let $\mathbb{k} = \mathbb{R}$, $m=2$,again. Consider the ideal $I_1 = \langle x^2 + y^2 - 1, x-1\rangle $. $x^2 + y^2 - 1 = 0$ corresponds to a circle and $x-1 = 0 $ to a line interseting the circle at one point. The intersection of these two sets is the point $x=1,y=0$ which is defined by the ideal $I_2 = \langle x-1, y \rangle$. Note however that $I_1 \neq I_2$. The idea is that since this point has multiplicity 2, this fails. Note that $x^2+y^2-1 - (x-1)(x+1) = y^2$ meaning $\langle x^2 + y^2 - 1, x-1\rangle \subseteq \langle y^2 , x-1\rangle \subseteq \langle y , x-1\rangle$ indeed.
\end{example}

Nonetheless, doing the $I(V(I))$ for any ideal, corresponding to a set of points (closed ideal) or not, always results in a closed ideal. In particular a closed ideal is also a radical ideal, which is defined by:

\begin{equation}
\sqrt{I}=\left\{a \in R \mid a^n \in I \text { for some } n>0\right\} \text {. }
\end{equation}

This example highlights a key fact: even if we begin with an ideal that does not correspond precisely to a set of reduced points (i.e., has higher multiplicity), the operation $I(V(I))$ always produces a \emph{closed ideal}, one that corresponds exactly to a geometric variety (i.e., a set of points). In fact, the ideal $I(V(I))$ is always a \emph{radical ideal}, defined as:
\begin{equation}
\sqrt{I} = \left\{ a \in R \mid a^n \in I \text{ for some } n > 0 \right\}.
\end{equation}

Thus, we always have the chain of inclusions:
\begin{equation}
	I \subseteq I(V(I)) \subseteq \sqrt{I}.
\end{equation}

This observation leads us directly to the importance of Hilbert's Nullstellensatz, which (in its strong form) establishes a precise correspondence between radical ideals and the vanishing sets they define. Essentially, this tells us that taking the vanishing set of an ideal and then pulling back the ideal of that set always returns the radical of the original ideal: $I(V(I)) = \sqrt{I}$. This is the core idea connecting geometry (sets of points) to algebra (ideals of polynomials) that classical algebraic geometry is built on.

\section{Generalized Alternant Matrix \label{appendix:alternant}}

The alternant matrix can be thought of as generalization of the Vandermonde matrix in the space of continuously differentiable functions, which polynomials are also part of. Where before we wrote algebraic varieties, as a way to characterize spaces of points described by a set of polynomial equations, now we write analytic variety, as the spaces of points described by a set of holomorphic functions. Holomorphic functions are exactly the set of functions which are said to be continuously differentiable in a neighborhood of a point, for a set of points in the domain, such that their Taylor expansion is available locally. However, this case will be more complicated to make concrete statements, such as the ones we did for the invertibility of the Vandermonde matrix over a grid. This is due to the fact that the set of zeros of a holomorphic function is not necessarily finite, nor will the generating set of the ideal be.

Similarly to the definition of a algebraic variety being a zero locus of a set of polynomials, we define a analytic variety taking a set of holomorphic functions $\mathcal{F} = \{ f_1, f_2, \cdots , f_n \}$ such that each $f_j : U \subseteq \mathbb{k}^m \rightarrow \mathbb{k}$ for some subset $U$ of the space we are working in, in the following way:

\begin{equation}
	\mathrm{V}(\mathcal{F})=\left\{x \in U \mid f_j(x)=0 \text { for all } f_j \in \mathcal{F}\right\}
\end{equation}

Moroever, we are also able to define ideals in the ring of holomorphic functions. It is known \cite{Henriksen1952,Henriksen1953} that these ideals can be either generated by one element, or infinite elements. Nevertheless, we are able to make the same statements and provide the same conditions on the invertibility of the alternant matrix taking the analytic variety instead of algebraic geometry. Namely, taking the alternant matrix over two equally sized sets of points and holomorphic functions $V(X,\mathcal{F})$, then all the points in $X$ cannot belong to some variety spanned by the functions in $\mathcal{F}$ in order to get invertibility. We will refrain from stating the theorem in this section, as in the next section we will provide a statement that is applicable to this case, but also to the generalization of this case into a linear least-squares method.

\section{Optimizing Local and Non-Local Strategies \label{appendix:optimizing}}

To optimize the local and non-local strategies under the assumption that our total resources are limited we describe a vector of resources $\boldsymbol{n}$, where each index corresponds to a the amount of resources we give to each sensor. In particular, the constraint of limited total resources comes as the 1-norm of this vector is bounded by $N$. If we express $\boldsymbol{n} = N \boldsymbol{\delta n}$, then $\norm{\boldsymbol{\delta n}}_1 = 1$, and each component can be seen as the percentage distribution of the resources. 

The target estimator is given by $\hat{H} = \boldsymbol{c}\cdot \boldsymbol{F}$. For the non-local case, by making $\boldsymbol{n} = \alpha \boldsymbol{c}$, we directly get access to $\hat{H}$ with total variance given by $1/\alpha^2$. Moreover, since $\norm{\boldsymbol{n}}_1 = N$, we get that:
\[
	\Delta \hat{H}^2 = \frac{\norm{\boldsymbol{c}}^2}{N^2}.
\]
For the local case, we have access to each of the estimators of $\hat{F}(x)$ with a precision given by $\Delta \hat{F}(x_i)^2 = 1/n_i^2$ which we then use to construct $\hat{H}$. Using error propagation we then have that:
\[
	\Delta \hat{H}^2 = \sum_{i} \frac{|c_i|^2}{n_i^2}.
\]
Note that if we have a classical strategy, we can also do the same as above, but substituting the power two in the denominator by one, has a classical case does not reach the Heisenberg scaling. Nonetheless, the problem of finding the optimal distribution of qubits can be formulated as solving the following general optimization problem:

\begin{equation}
\begin{aligned}
	\text{Minimize: }& \sum_{j=1}^n  \frac{|a_j|^q}{|b_j|^p}  \\
	\text{Subject to: }& a_1, \cdots , a_n  \in \mathbb{R} \setminus \{ 0 \} \\
	& b_1, \cdots , b_n > 0 \\
	& \sum_j b_j = N
\label{eq:optimization}
\end{aligned}
\end{equation}

One can find the solution from the following main equation:

\begin{equation}
\begin{aligned}
	 \sum_{i=1}^n  \frac{|a_i|^q}{|b_i|^p}  &= \sum_{i=1, i\neq k}^n \frac{|a_i|^q}{|b_i|^p}   +  \frac{|a_k|^q}{|N-\sum_{i\neq k} b_i|^p}  \\[10pt]
	 \frac{\partial (\cdot) }{\partial b_j} = 0 &\Longleftrightarrow   \frac{|a_j|^q}{|b_j|^{p+1}} =  \frac{|a_k|^q}{|b_k|^{p+1}}
\label{eq:optimization2}
\end{aligned}
\end{equation}
And given that one can find the same equation for every different pair of $j,k$ such that $j\neq k$, one concludes that:

\begin{equation}
\begin{aligned}
	\frac{|a_j|^q}{|b_j|^{p+1}} &= \alpha\quad , \forall j \\[10pt]
	\implies |a_j|^q &= \alpha |b_j|^{p+1} \\
	\Longleftrightarrow b_j &= \tilde{\alpha} |a_j|^{\frac{q}{p+1}} \\[10pt]
	\sum_j b_j = N &\implies \tilde{\alpha} = \frac{N}{\sum_j |a_j|^\frac{q}{p+1}} \\[10pt]
	\min_{b_1,\cdots,b_n} \sum_{j=1}^n  \frac{|a_j|^q}{|b_j|^p}  &= \frac{1}{|\tilde{\alpha}|^{p}} \sum_j |a_j|^{\frac{q}{p+1}}= \frac{1}{N^p} \left( \sum_j |a_j|^{\frac{q}{p+1}} \right)^{p+1}\\
	&= \frac{1}{N^p} \norm{\vec{a}}_{\frac{q}{p+1}}^q
\label{eq:optimization3}
\end{aligned}
\end{equation}

Where we used the notation of the norm, even though it is not really a norm, as the $p-$norm is only defined for $p\geq 1$ and $q/(p+1)$ can be smaller than 1, depending on the choice of $p,q$. We discuss the bounds of this norms over the next Appendix~\ref{appendix:norms}, as we can use them to prove that the non-local strategy is always better than the best local strategy. In particular one can also observe from the norm inequalities that only case where a vector $\boldsymbol{c}$ has one non-zero component, the non-local is equal to the local strategy. This case however, corresponds to putting every resource on one sensor only, which is in fact a local approach.

\section{$p$-Norms and $q/p-$Norms \label{appendix:norms}}

Let $x \in \mathbb{R}$ be a real vector of size $n$. It is known that the $p-$norm of a vector is well-defined as:

\begin{equation}
	\norm{x}_p = \left( \sum_{i=1}^n |x_i|^{p}\right)^{1/p}
\end{equation}

This constitutes a norm for $p\geq 1$, as it verifies all norm properties. Even more, for $p > q \geq 1$ we have that:

\begin{equation}
	\norm{x}_p \leq \norm{x}_q \leq n^{\frac{1}{q}-\frac{1}{p}} \norm{x}_p
	\label{eq:normineq}
\end{equation}

Now let us suppose $p$ and $q$ are not necessarily equal or bigger than one. This does not define a norm, as the triangle inequality no longer holds. Nonetheless we want to extend the inequality in Eq.~\ref{eq:normineq} to this case.

\begin{proposition}
Let $p,q$ be positive real numbers such that $p>q$, and $x$ be a real vector of size $n$, $x\in \mathbb{R}^n$.
\begin{equation}
	\norm{x}_p \leq \norm{x}_q \leq n^{\frac{1}{q}-\frac{1}{p}} \norm{x}_p
\end{equation}
\end{proposition}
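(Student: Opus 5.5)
The plan is to prove the two inequalities separately, using only elementary monotonicity and convexity arguments. None of these arguments needs the triangle inequality, so they remain valid for all positive exponents. Throughout, take $x \neq 0$, since for $x=0$ all three quantities vanish. By homogeneity, $\norm{\lambda x}_r = |\lambda| \norm{x}_r$ holds for every $r>0$, even when $r<1$.

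For the left inequality $\norm{x}_p \leq \norm{x}_q$, normalize so that $\norm{x}_q = 1$. Then every component satisfies $|x_i| \leq 1$. Since $p > q$, this gives $|x_i|^p \leq |x_i|^q$ for each $i$. Summing yields $\norm{x}_p^p \leq \sum_i |x_i|^q = 1$, hence $\norm{x}_p \leq 1 = \norm{x}_q$. Undoing the normalization by homogeneity gives the general case. This step is routine.

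For the right inequality, set $y_i = |x_i|^q \geq 0$ and $s = p/q > 1$. The map $t \mapsto t^{s}$ is convex on $[0,\infty)$. Jensen's inequality for the uniform average, or equivalently Hölder's inequality with exponents $s$ and $s/(s-1)$ applied to $\sum_i 1\cdot y_i$, gives
\[
\Big(\frac{1}{n}\sum_i y_i\Big)^{s} \leq \frac{1}{n}\sum_i y_i^{s}.
\]
Both Jensen and Hölder apply here because $s \geq 1$, even though $p$ and $q$ themselves may be less than $1$. Rewriting, this reads $\big(\tfrac{1}{n}\norm{x}_q^q\big)^{p/q} \leq \tfrac{1}{n}\norm{x}_p^p$. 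Taking the $1/p$-th power, which is monotone, gives $n^{-1/q}\norm{x}_q \leq n^{-1/p}\norm{x}_p$, which is exactly $\norm{x}_q \leq n^{\frac{1}{q}-\frac{1}{p}} \norm{x}_p$.

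The only point that needs care is this second step. The standard proof of Eq.~\ref{eq:normineq} is sometimes phrased through Hölder's inequality for norms with exponents that must be at least $1$. I will avoid that by applying convexity to the ratio $s=p/q>1$ rather than to $p$ or $q$ individually. Finally, I will note the equality cases, since they are what Appendix~\ref{appendix:optimizing} uses. Equality on the left holds exactly when $x$ has a single nonzero entry. Equality on the right holds exactly when all $|x_i|$ are equal, by strict convexity of $t^s$.
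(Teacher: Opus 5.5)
Your proof is correct, but it takes a different route from the paper. The paper does not argue directly for exponents below $1$. It takes the classical inequality for $p>q\geq 1$ as given and transfers it by the substitution $|y_i|=|x_i|^r$, under which $\norm{x}_p=\norm{y}_{p/r}^{1/r}$ and $\norm{x}_q=\norm{y}_{q/r}^{1/r}$. Raising the classical chain to the power $r$ turns the constant $n^{1/q-1/p}$ into $n^{r/q-r/p}$, which is exactly the constant for the exponents $p/r,q/r$.

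You instead argue from scratch for every $p>q>0$. The lower bound comes from $|x_i|\leq 1$ after normalizing $\norm{x}_q=1$, and the upper bound from convexity of $t\mapsto t^{p/q}$. This way only the ratio $p/q$, never $p$ or $q$ individually, needs to be at least $1$.

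The paper's route is shorter once the classical case is granted, and it shows the inequality is invariant under this exponent rescaling. Yours is self-contained and also gives the equality cases: one nonzero entry on the left, all $|x_i|$ equal on the right. The paper only asserts these informally in its local versus non-local comparison.

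Your argument also avoids a coverage issue in the paper. The paper splits into $p,q\geq 1$ and, because it insists on $r>p$, into $1>p'>q'>0$. So the mixed case, where the larger exponent is at least $1$ and the smaller is below $1$, is not literally treated. Yet that case contains the pair $(1,2/3)$ behind $\norm{\boldsymbol{c}}_1\leq\norm{\boldsymbol{c}}_{2/3}\leq\sqrt{d}\,\norm{\boldsymbol{c}}_1$, the instance the paper actually uses. The substitution does handle it if one drops $r>p$ and starts from the classical pair $(p'/q',1)$ with $r=1/q'$. Your argument covers all cases at once without any case split.
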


\begin{proof}
This is already true for $p,q\geq 1$. To prove for $1>p'=p/r,q'=q/r>0$ let $r\in\mathbb{R}$ be a rescaling value such that $r>p>q\geq 1$. Then:
\begin{equation}
\begin{aligned}
	\left( \sum_{i=1}^n |x_i|^{p} \right)^{1/p} &\leq \left( \sum_{i=1}^n |x_i|^{q}\right)^{1/q}  \leq n^{\frac{1}{q}-\frac{1}{p}} \left( \sum_{i=1}^n |x_i|^{p} \right)^{1/p} \\
	\left( \sum_{i=1}^n |y_i|^{p/r}\right)^{1/p} &\leq \left( \sum_{i=1}^n |y_i|^{q/r}\right)^{1/q} \leq n^{\frac{1}{q}-\frac{1}{p}}\left( \sum_{i=1}^n |y_i|^{p/r}\right)^{1/p}\\
	\left( \sum_{i=1}^n |y_i|^{p/r}\right)^{r/p} &\leq \left( \sum_{i=1}^n |y_i|^{q/r}\right)^{r/q} \leq n^{\frac{r}{q}-\frac{r}{p}} \left( \sum_{i=1}^n |y_i|^{p/r}\right)^{r/p} \\[15pt]
	\norm{y}_{p/r} &\leq \norm{y}_{q/r} \leq n^{\frac{r}{q}-\frac{r}{p}} \norm{y}_{p/r} \qquad ,\forall y\in\mathbb{R}^n
\end{aligned}
\end{equation}
Where $|y_i| = |x_i|^r$, which is a continuous transformation, meaning if it is valid for all $x$, then we can say the same for all $y$. Now $1>p'=p/r >q' = q/r >0$, proving the left hand side inequality. The right-hand side inequality comes straightforwardly in the same way. 
\end{proof}

\section{Interpolation Error \label{appendix:interpolation}}

In this appendix we provide more insight of the accuracy, or error, of the interpolation of a polynomial function depending on the number of points in a grid on has. For a equally spaced grid of $n\times n$ sensors, one can construct an invertible Vandermonde matrix with monomials of degree $\boldsymbol{\alpha}$ such that $\norm{\boldsymbol{\alpha}}_\infty \leq n-1$. This means any polynomial containing monomials of degree within this region will have zero error. We plot in Fig.~\ref{fig:errordeg} two polynomials and their errors for the $3\times 3$ and $5 \times 5$ sensor arrays. 

\begin{figure}[ht!]
\centering
\subfloat{\includegraphics[width=0.6\linewidth]{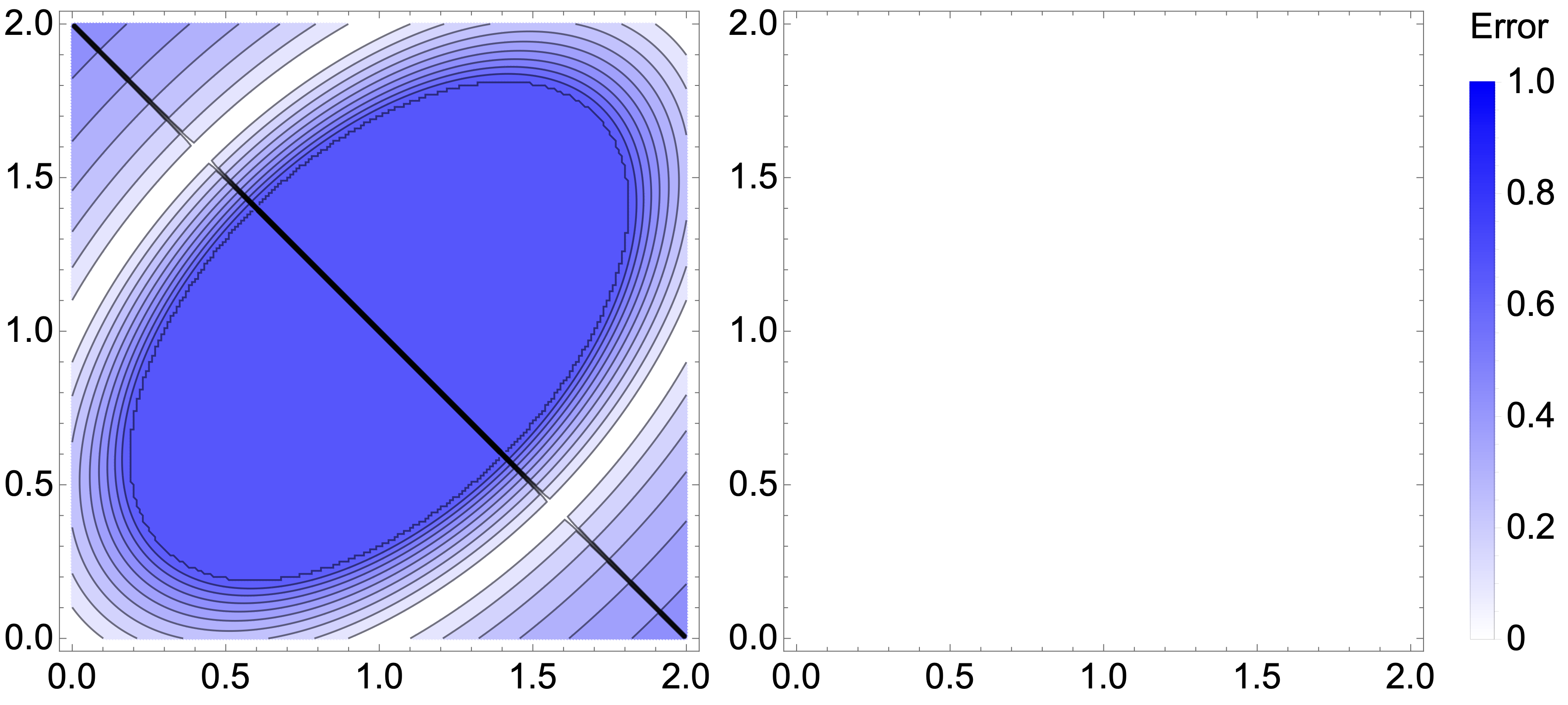}} \\
\subfloat{\includegraphics[width=0.6\linewidth]{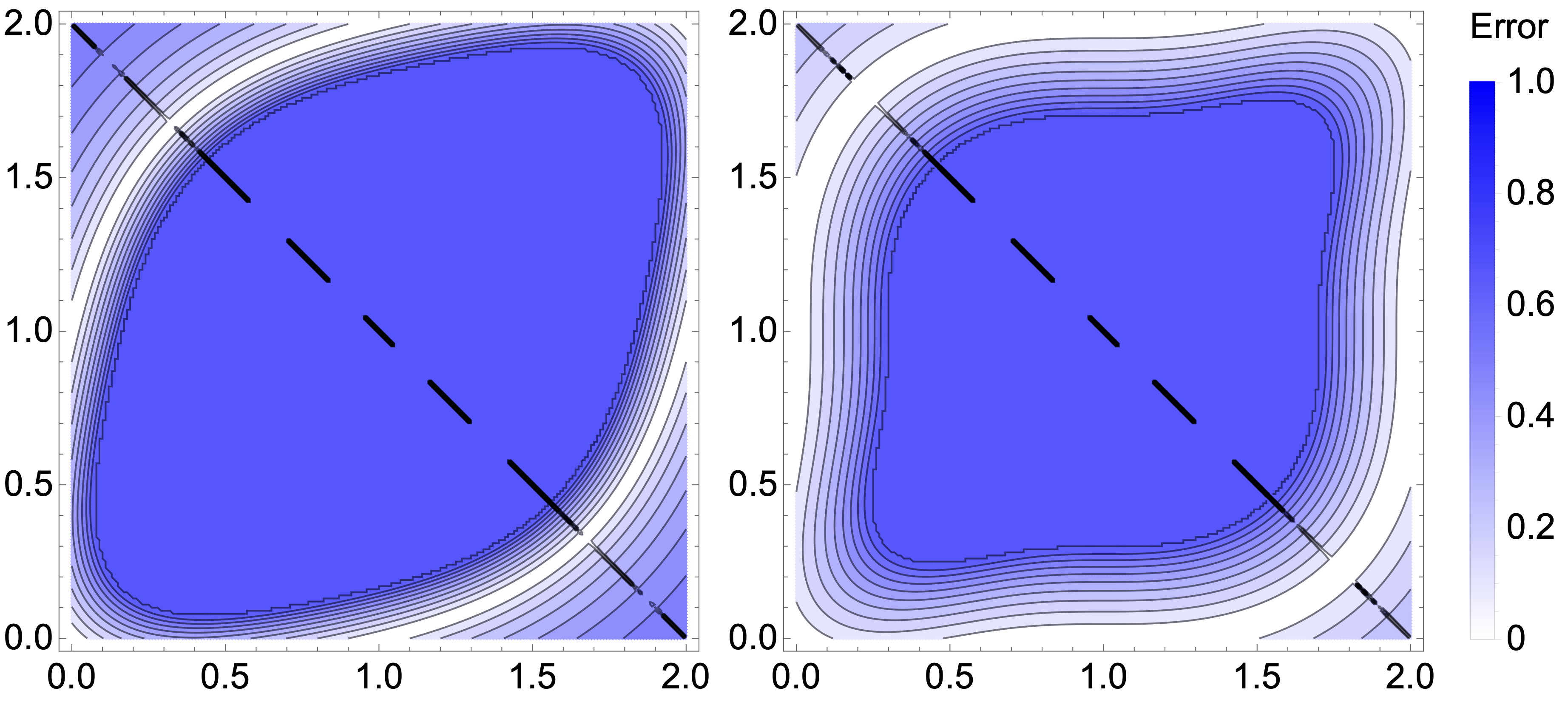}} 
\caption{\textbf{Top Row:} Comparison of the error of the interpolation of the polynomial $F(x) = (x-1)^3 + (y-1)^3$, considering the two possible equally spaced grids containing $3\times 3$ sensors (left) and $5\times 5$ sensors (right). \textbf{Bottom Row:} Comparison of the error of the interpolation of the polynomial $F(x) = (x-1)^5 + (y-1)^5$, considering the two possible equally spaced grids containing $3\times 3$ sensors (left) and $5\times 5$ sensors (right).}
\label{fig:errordeg}
\end{figure}

\end{document}